\RequirePackage{xcolor}
\documentclass[letterpaper, 10 pt,  journals]{ieeecolor}  % Comment this line out
\definecolor{subsectioncolor}{rgb}{0,0.541,0.855}
\usepackage{float}

\usepackage{TLC}
\usepackage{xurl}
\usepackage[normalem]{ulem}
\usepackage{setspace}
\overrideIEEEmargins

% drawing and graphing packages

% maths and symbols
\usepackage{mathtools}
\usepackage{gensymb}
\usepackage{amsmath}
%\usepackage{physics}

% doc layout and compiling tools
\usepackage{epigraph}
\usepackage{shellesc}
%\usepackage{enumitem}
% bibliography packages

\usepackage[style=ieee,
            backend=biber,
            url=false,
            date=year]{biblatex}

\graphicspath{ {./codes/} }

\addbibresource{Tony.bib}

\theoremstyle{remark}

\declaretheorem[name=Objective,sibling=theorem]{objective}

\overrideIEEEmargins
% See the \addtolength command later in the file to balance the column lengths
% on the last page of the document

% The following packages can be found on http:\\www.ctan.org

\title{
Kernel Modelling of Fading Memory Systems
}

%\author{ \parbox{3 in}{\centering Huibert Kwakernaak*
%         \thanks{*Use the $\backslash$thanks command to put information here}\\
%         Faculty of Electrical Engineering, Mathematics and Computer Science\\
%         University of Twente\\
%         7500 {\cal U}_\text{past}E Enschede, The Netherlands\\
%         {\tt\small h.kwakernaak@autsubmit.com}}
%         \hspace*{ 0.5 in}
%         \parbox{3 in}{ \centering Pradeep Misra**
%         \thanks{**The footnote marks may be inserted manually}\\
%        Department of Electrical Engineering \\
%         Wright State University\\
%         Dayton, OH 45435, USU\\
%         {\tt\small pmisra@cs.wright.edu}}
%}

\author{Yongkang Huo, Thomas Chaffey, Rodolphe Sepulchre% <-this % stops a space
\thanks{The authors are with the University of Cambridge, Department of Engineering, Trumpington Street, CB2 1PZ, \texttt{\{yh415, tlc37\}@cam.ac.uk}.  R. Sepulchre is also with KU Leuven,
Department of Electrical Engineering (STADIUS),
KasteelPark Arenberg, 10,
B-3001 Leuven, Belgium,
\texttt{rodolphe.sepulchre@kuleuven.be}.}
\thanks{The research leading to these results has received funding from the European Research Council under the
Advanced ERC Grant Agreement SpikyControl n.101054323. The work of Y. Huo was supported by the UK Engineering and Physical Sciences Research Council (EPSRC) grant 10671447 for the University of Cambridge Centre for Doctoral Training, the Department of Engineering.  The work of T. Chaffey was supported by Pembroke College, Cambridge.
}% <-this 
}

\begin{document}

\maketitle
\thispagestyle{empty}
\pagestyle{empty}

%%%%%%%%%%%%%%%%%%%%%%%%%%%%%%%%%%%%%%%%%%%%%%%%%%%%%%%%%%%%%%%%%%%%%%%%%%%%%%%%
\begin{abstract}

    The paper is a follow-up of the recently introduced kernel-based framework to  identify nonlinear input-output systems regularised by desirable input-output incremental properties. Assuming that the system has fading memory, we propose to learn the functional that maps the past input to the present output rather than the operator mapping input trajectories to output trajectories. While retaining the benefits of the previously proposed framework, this modification simplifies the selection of the kernel, enforces causality, and enables temporal simulation.

\end{abstract}

%%%%%%%%%%%%%%%%%%%%%%%%%%%%%%%%%%%%%%%%%%%%%%%%%%%%%%%%%%%%%%%%%%%%%%%%%%%%%%%%
\section{Introduction}

Machine learning has boosted our ability to identify nonlinear systems.  Many models have been proposed for nonlinear system identification, both parametric \autocite{boyd_fading_1985,grigoryeva_echo_2018,maass_neural_2000,burghi_system_2020} and non-parametric \autocite{boffi_nonparametric_2021}.  A particular area of recent interest has been in kernel-based identification methods \autocite{Pillonetto2010, Pillonetto2011, Pillonetto2014, Dinuzzo2015, Bouvrie2017, Khosravi2021, Khosravi2023, Ljung2020, Bottegal2016}. It remains a challenge, however, to use identified models for control design, as they often lack guarantees on key system properties such as passivity and small-gain.  This issue has motivated the framework of ``identification for control'' \autocite{Gevers1996}.  Identifying models which satisfy pre-specified system properties has recently received interest \autocite{Revay2021c, Revay2021d}.  System properties such as Lipschitz continuity  have become recognized as important robustness properties in the machine learning literature \autocite{Erichson2023, Winston2020, Revay2020b, Fazlyab2019a}. 

The recent article \autocite{van_waarde_kernel-based_2023} introduced a kernel-based framework for system identification  which allows the data fitting to be regularised with input-output properties specified in the form of incremental integral quadratic constraints (IQCs). The method identifies an input-output operator, but the kernel representation of the operator lacks a state-space realization and simulation forwards in time is expensive as it involves computing an entire future trajectory at each time step. 

This technical note addresses those issues by exploiting the property of {\it fading memory}. The starting  observation is that any time-invariant causal operator is determined by a memory functional that maps the past of the input to the present output. This observation was also the starting point of the seminal paper \cite{boyd_fading_1985} of Boyd and Chua, who defined fading memory as an elementary continuity property for such operators. The standard approach in the literature of fading memory systems has been to represent fading memory operators with state-space models. Here we bypass such state-space representations and consider instead a direct kernel-based representation of the fading memory functional. { In comparison to \cite{van_waarde_kernel-based_2023}, the memory functional learned from input-output data is more practical for simulation and control as its evaluation only involves past inputs and at each time step only computes the current output.} Our main result is to show that this modified framework retains the benefits of the methodology in \autocite{van_waarde_kernel-based_2023}, namely that the learning task can be regularised with desirable incremental input-output properties such as a small Lipschitz constant or incremental passivity.

{ The remainder of this paper is structured as follows. In Section~\ref{sec:preliminary}, we introduce some  notation and background on fading memory.  The novel kernel-based framework is introduced and compared with other frameworks in Section~\ref{chap:kernel}. In Section~\ref{sec:Kernel_example}, we identify simple examples of systems using regularised least squares to demonstrate how incremental small-gain and incremental passivity can be imposed with the proposed framework. Conclusions are drawn in Section~\ref{sec:conclusions}. }

\section{Preliminaries}\label{sec:preliminary} 
 The mathematical setup and notation of this technical note are largely borrowed from \cite{boyd_fading_1985}. The $L_2$ norm in $\R^n$ is denoted by
$|\cdot|$. We consider time-invariant, continuous time, causal systems represented by operators mapping $m$-dimensional input signals to $n$-dimensional output signals. { The input
and output spaces of bounded and
rate-limited (Lipschitz continuous) signals are denoted 
by ${\cal U}\subset C^0_b(\R,\R^m), {\cal Y}\subset C^0_b(\R,\R^n)$, respectively. Input and output signals on the entire time axis $\R$ are denoted by $u_c \in {\cal U}$ and $y_c \in {\cal Y}$, respectively. We denote the space of bounded and
rate-limited past input signals by ${\cal U}_\text{past} \subset C^0_b\left((-\infty,0],\R^m\right)$.
% \begin{IEEEeqnarray*}{lcr}
%  % {{\cal U}}_\text{past}\subseteq\;\{u(\cdot):(-\infty, 0] \to \R^{m}\}.
%  % {\cal U}_\text{past}:={\cal U}_\text{past}\cap {\cal W},\\
%  {\cal U}_\text{past}: \{ u(\cdot):(-\infty, 0] \to \R^{m}  ~| ~|u(t_1)-u(t_2)|<C_1|t_1-t_2|,\\ 
%  \;\;\;\;\;\;\;\;\;\;\;\;\;\;\;\;\;\;\;\;\;\;\;\;\;\;\;\;\;\; ~\sup_t |u(t)|<\infty, 0<C_1<\infty,\},
% \end{IEEEeqnarray*}
}
For $1 \leq p < \infty$, we define $\|u\|_{p}$ for $u:(-\infty, 0]\to\mathbb{R}^m$ by
$$
\|u\|_{p}=\left(\int_{-\infty}^0 |u(t)|^p dt\right)^{1/p},
$$ 
with analogous definitions on the entire real axis for $u_c\in \mathcal{U}$ as
$$
\|u_c\|_{p}=\left(\int_{-\infty}^{\infty} |u_c(t)|^p dt\right)^{1/p}.
$$ 

% where the region of integration is the domain of $u$ $\left(\R \; \text{or} \; (-\infty, 0]\right)$.
The $L_\infty$ norm is given by
\begin{IEEEeqnarray*}{rCl}
        \norm{u}_\infty &=& \sup_t |u(t)|,
\end{IEEEeqnarray*}
where the supremum is taken over the domain of $u$.  

% Unless otherwise stated, we
% assume all signal spaces are equipped with the $L_\infty$ norm.

A \emph{system} is an operator $G: {\cal U} \to {\cal Y}$.   Let $U_\tau$ denote the delay
operator, $(U_\tau u_c)(t):=u_c(t+\tau)$. Then system $G$ is \emph{time
invariant (TI)} if $GU_\tau u_c=U_\tau
Gu_c$. $G$ is said to be \emph{causal} if $u_c(t)=v_c(t)$ for $t\le0$ implies
$(Gu_c)(t)=(Gv_c)(t)$. $G$ is \emph{continuous}
if it is a continuous function $G: {\cal U}\to{\cal Y}$ with respect to the norm of ${\cal U}$ and ${\cal Y}$. 
\textcite{boyd_fading_1985} show that each TI causal operator $G$ has a unique
functional $F:{\cal U}_\text{past}\to \R^n$ mapping past input to current output, such that
$$
Fu\coloneqq(Gu)(0).
$$
We call this functional the \emph{memory functional}.
Each time-invariant and causal operator $G$ is uniquely determined by its memory functional $F$, and vice versa. 

Define $P:{\cal U}\to{\cal U}_\text{past}$ to be the truncation operator, $(Pu_c)(t):=u_c(t) , \forall t\le 0$. Then we have 
$$
(Gu_c)(t)=FPU_{t}u_c.
$$
Since $PU_{t}$ is a continuous operator, $G$ is continuous if and only if $F$ is continuous \cite{boyd_fading_1985}. Hence we say that an operator $G$ is continuous if and only if its memory functional $F:{\cal U}_\text{past}\to \R^{n}$ is continuous: for all $ \epsilon>0$, there exists $\delta>0$ such that, for all $u,v\in{\cal U}_\text{past}$,
$\sup_{t\le0}|u(t)-v(t)|<\delta$ implies $|Fu-Fv|<\epsilon$.

Let $H$ be a map between arbitrary normed spaces $\cal U$ and $\cal Y$.  Then, if, for all $u,
v \in \cal U$, we have
\begin{IEEEeqnarray*}{rCl}
        \norm{Hu - Hv}_{\cal Y} \leq \lambda \norm{u - v}_{\cal U},
\end{IEEEeqnarray*}
$\lambda$ is said to be a \emph{Lipschitz constant} of $H$, and $H$ is said to be Lipschitz. This definition applies equally to operators and  memory functionals.

An operator $H$ is described as having an \emph{incremental small-gain} if its Lipschitz constant is less than or equal to 1 and the two normed spaces are equipped with the $L_2$ norm. This constant serves as a limit on the incremental gain of the system, making it a crucial metric for assessing robustness.

\section{Kernel modelling of fading memory systems}\label{chap:kernel}

\subsection{Fading memory}

Fading memory is an important stability property which expresses that the influence of the past input on the current output fades away over time. An LTI system with the fading memory property has a convolution representation. Fading memory was originally introduced to system approximation by  \textcite{boyd_fading_1985}, where it was shown that Volterra series are universal approximators of systems with fading memory.  The concept later received attention in the context of robust control \autocite{shamma_fading-memory_1993, Sandberg1994}. Recently, \textcite{Gonon2022} introduced the \emph{reservoir kernel} as a universal approximant for the Volterra series expansion of a fading memory operator.

Instead of the original fading memory definition in \cite{boyd_fading_1985}, we adopt and extend the fading 
memory norm  definition of \cite{matthews_identification_1994} and the concept of fading memory normed spaces of \textcite{grigoryeva_echo_2018}. We use this to define the input signal space as a Hilbert space with a fading memory norm, which enables in turn a kernel representation of the memory functional.

{Let $w: (-\infty, 0] \to (0, 1]$ be a weight,
such that $\lim_{t\to-\infty}w(t)=0$ and $w$ has finite $L_p$ norm for all $ 1 \leq p\leq \infty$. We define the $(w, p)$ fading memory norm of a past signal $u$ as 
$$
\|u\|_{w,p}:=\|u\,w\|_{p}, 
$$    
%\end{definition}
%\begin{definition}
 and  the corresponding $(w, p)$ fading memory normed space as 
$$
{\cal W}_{w, p} \coloneqq \{u:(-\infty, 0] \to \R^{m}\; | \;\|u\|_{w, p}< \infty\}.
$$    
%\end{definition}
{
Note that ${\cal U}_\text{past}$ is a subspace of ${\cal W}_{w, p}$.}

In this paper, except Section~\ref{sec:incremental small-gain_pass}, we restrict our attention to the case $p=2$ and denote the norm by $\|u\|_{{\cal U}_\text{past}}:=\|u\|_{w, 2}$.  In this case, we also have an inner product, denoted by $\left\langle u,v \right\rangle_{{\cal U}_\text{past}}=\left\langle u,v \right\rangle_{w}:=\langle uw,vw\rangle$, where $\langle u,v \rangle$ is the regular $L_2$ inner product.} Fading memory is then defined as the continuity with respect to a fading memory norm.
{
\begin{definition}\label{fading memory definition}
Consider an operator $G: {\cal U} \to {\cal Y}$. We say that $G$ has $(w, p)$ \emph{fading memory} if and only if its
functional $F:{\cal U}_\text{past}\to \R^{n}$ is continuous with respect to the $(w,
p)$ fading memory norm: for all $ \epsilon>0$, there exists $\delta>0 $ such that, for
all $u, v \in {\cal U}_\text{past}$, we have
$\|u-v\|_{w, p}<\delta$ implies that $ |Fu-Fv|< \epsilon$.

% \begin{IEEEeqnarray*}{+rCl+x*}
% \|u-v\|_{w, p}<\delta &\Rightarrow&  |Fu-Fv|< \epsilon. 
% &\qedhere
% \end{IEEEeqnarray*}
\end{definition}
}

{\subsection{Problem statement}

{ We address the problem of modelling the fading memory functional $F$ of a fading memory system given a set of input-output data. This is made precise in Objective 1.
\begin{objective}
Given a data set with $l$ data points $\left\{(u_{c,i},y_{c,i})\in {\cal U} \times {\cal Y} \right\}_{i=1}^l$ find $\hat{F}$ that satisfies Definition~\ref{fading memory definition}, such that:
$$\sum_{t=-\infty}^{\infty}\sum_{i=1}^{l}|\hat{F}PU_{t}u_{c,i}-y_{c,i}(t)|^2,$$
is minimized.
\end{objective}
Let $u_{i,t}=PU_{t}u_{c,i}$ and $y_{i,t}=PU_{t}y_{c,i}$. Since in practice, we will not have data with infinite length but instead have data with finite length and finite time samples (e.g. \(t\in\mathcal{T},\text{card}(\mathcal{T})=N\)), we can drop the subscript $t$ by defining:

$$
\{(u_{j},\,y_{j}(0))\} \;:=\; 
\bigl\{\,\bigl(u_{i,t},\,y_{i,t}(0)\bigr): \; i=1,\ldots,l,\; t\in \mathcal{T}\bigr\}.
$$
Then we can rewrite our objective to be:
\begin{objective}
Given $\left\{(u_{j},y_{j}(0))\in {\cal U}_{\text{past}} \times \R^n \right\}_{j=1}^{l\times N}$, find $\hat{F}$ that satisfies Definition~\ref{fading memory definition}, such that:
$$\sum_{j=1}^{l\times N}|\hat{F}u_{j}-y_{j}(0)|^2,$$
is minimized.
\end{objective}

Most importantly, we wish to impose the incremental small-gain property to the approximated system. The extension to other incremental input-output properties is considered in Section \ref{sec:incremental small-gain_pass}.}
}

\subsection{Kernel representation of fading memory functionals}\label{sec:kernel_approx}
 The theory of approximation in Reproducing Kernel Hilbert Spaces is well-established for continuous  functions in real vector spaces
\autocite{paulsen_introduction_2016}, and has been extended to arbitrary separable Hilbert spaces in \autocite{christmann_universal_2010}.  We combine these results with the vector-valued kernels defined in \autocite{micchelli_learning_2005} to approximate any continuous vector-valued functional on a compact subset of a separable Hilbert space. Firstly, we define the RKHS of continuous functionals $\{F:{\cal U}_\text{past}\to\R^n\}$, with the
inner product $\langle\cdot,\cdot\rangle_{{\cal H}}$. 
% We use the notation  $\langle\cdot,\cdot\rangle_{{\cal U}_\text{past}}$ to denote the inner product of past signals in ${\cal U}_\text{past}$ (when it exists) and  $\langle\cdot,\cdot\rangle$ to denote the inner product in $L_2$ or $L_2$.
{
\begin{definition}[RKHS]
    A mapping $k:{\cal U}_\text{past} \times {\cal U}_\text{past} \to {\cal B}(\R^n)$, where ${\cal B}(\R^n)$ denotes the space of bounded linear operators on $\R^n$, is called a \textit{reproducing kernel} of  ${\cal H}$ if :
    \begin{enumerate}
        \item $k(\cdot,u)\alpha:{\cal U}_\text{past}\to\R^n$ is a member of $\cal H$ for all $u\in {\cal U}_\text{past}$ and $\alpha \in \R^n$
        \item The \textit{reproducing property} holds: for all $u\in{\cal U}_\text{past}$, $\alpha \in \R^n$, and $F\in {\cal H}$:
        $$\langle \alpha,Fu\rangle=\langle F,k(\cdot,u)\alpha\rangle_{{\cal H}} 
        $$\qedhere
    \end{enumerate}
\end{definition}
A \emph{universal kernel} is one that can approximate all fading memory functionals on a compact subset $S\in {\cal U}_\text{past}$.

\begin{definition}[Universal kernel \cite{caponnetto_universal_2008}]
Let $C^0({S},\R^{n})$ be the set of all continuous functionals mapping ${S}$ to $\R^n$. Then if ${\cal H}$ is dense in $C^0({S},\R^{n})$, the corresponding kernel $k$ is called a \emph{universal
 kernel} of $C^0({S},\R^{n})$. 
% $\mathcal{U}_{\text{past}}$, there exists a sequence of finite linear combinations of
% the form
% \[
% \sum_{i=1}^m \alpha_i\,K(x_i,\cdot)\quad (m \in \mathbb{N},\;\alpha_i \in \mathbb{R}^n,\;x_i
% \in \mathcal{U}_{\text{past}})
% \]
% that converges uniformly to $F$.  In other words, any continuous $F$ can be approximated
% arbitrarily closely by \emph{finite} sums of kernel sections. This fact follows from
% the Moore–Aronszajn theorem and the definition of universal kernels\cite{paulsen_introduction_2016,christmann_universal_2010,caponnetto_universal_2008}.
\end{definition}

For every continuous functional $F$ on ${S}$, there exists a sequence of finite linear combinations of
the form
\[
\sum_{i=1}^m \alpha_i^\top\,k(x_i,\cdot)\quad (m \in \mathbb{N},\;\alpha_i \in \mathbb{R}^n,\;x_i
\in {S})
\]
that converges uniformly to $F$: for every $\epsilon > 0$, there exists $M$ such that, for all $m \geq M$, there exists $\{(\alpha_i,x_i)\}$, such that for all $y\in S$ $$\left|\sum_{i=1}^m \alpha_i^\top\,k(x_i, y) - F(y)\right| < \epsilon  \qedhere.$$ This fact follows from
the Moore–Aronszajn theorem and the definition of universal kernels \cite{paulsen_introduction_2016,christmann_universal_2010,caponnetto_universal_2008}.}

The following theorem of \textcite{christmann_universal_2010} shows how to find such a kernel for $C^0({S},\R^n)$.
\begin{theorem}[\text{\autocite[Thm. 2.2]{christmann_universal_2010}}]Let $S$ be a compact subset of a separable Hilbert space with inner product $\langle\cdot,\cdot\rangle_{{\cal U}_\text{past}}$. Let $\sigma(\cdot):\R\to\R$ be a function that can be expressed by its Taylor series
$$
\sigma(t)=\sum_{n=1}^{\infty}c_nt^n.
$$
If $c_n>0$ for all $n$ then the kernel $k:{\cal U}_\text{past}\times {\cal U}_\text{past}\to \R$
$$
k(u,v)=\sigma(\langle u,v\rangle_{{\cal U}_\text{past}})
$$
is a universal kernel of $C^0(S,\R)$. 
\end{theorem}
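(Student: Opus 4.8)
The plan is to prove universality by establishing that the RKHS ${\cal H}$ associated with $k$ is dense in $C^0({\cal U}_\text{past},\R)$ for the uniform norm, and to extract this density from the Stone--Weierstrass theorem. The strategy rests on exhibiting inside ${\cal H}$ a sufficiently rich subalgebra of continuous functions, namely the polynomials in the coordinate functionals of the ambient separable Hilbert space, and then invoking compactness of ${\cal U}_\text{past}$ to close the argument.

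First I would check that $k$ is a genuine reproducing kernel and decompose it. Writing $k(u,v)=\sum_{n=0}^\infty c_n\langle u,v\rangle_{{\cal U}_\text{past}}^n$, each term $\langle u,v\rangle_{{\cal U}_\text{past}}^n$ is positive definite, being the $n$-fold product (symmetric tensor power) of the positive definite kernel $(u,v)\mapsto\langle u,v\rangle_{{\cal U}_\text{past}}$; since every $c_n>0$, the series is a convergent nonnegative combination of kernels and hence itself a kernel. Compactness of ${\cal U}_\text{past}$ yields a radius $R$ with $\|u\|\le R$, so that $|\langle u,v\rangle_{{\cal U}_\text{past}}|\le R^2$; provided the Taylor series of $\sigma$ converges on $[-R^2,R^2]$, this makes $k$ well defined and continuous, which in turn guarantees ${\cal H}\subseteq C^0({\cal U}_\text{past},\R)$.

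Next I would identify which functions lie in ${\cal H}$. Fixing an orthonormal basis $\{e_i\}$ and writing $u_i=\langle u,e_i\rangle_{{\cal U}_\text{past}}$, the RKHS of $\langle u,v\rangle_{{\cal U}_\text{past}}^n$ consists of the homogeneous polynomials of degree $n$ in the coordinates $u_i$, and by Aronszajn's sum theorem the RKHS of a positive combination of kernels contains the RKHS of each summand (apply this to the splitting of $k$ into its partial sum up to degree $N$ plus the remainder). Because each $c_n$ is strictly positive, ${\cal H}$ therefore contains every monomial $u\mapsto\prod_i u_i^{\alpha_i}$ with finitely supported $\alpha$, and in particular the constant function $1$ arising from the $n=0$ term. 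Letting ${\cal A}$ denote the algebra generated by $\{1\}$ together with the coordinate functionals $u\mapsto u_i$, we obtain ${\cal A}\subseteq{\cal H}$.

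Finally I would apply Stone--Weierstrass to ${\cal A}$ on the compact set ${\cal U}_\text{past}$: it is a subalgebra of $C^0({\cal U}_\text{past},\R)$ closed under multiplication, it contains the constants, and it separates points, since $u\ne v$ forces $u_i\ne v_i$ for some $i$ and the continuous linear functional $u\mapsto u_i$ then distinguishes them. Stone--Weierstrass gives that ${\cal A}$ is uniformly dense in $C^0({\cal U}_\text{past},\R)$, and ${\cal A}\subseteq{\cal H}$ propagates this density to ${\cal H}$, proving $k$ universal. I expect the main obstacle to be the rigorous identification of the RKHS of the dot-product kernel with polynomial spaces in infinitely many coordinates, together with the bookkeeping needed to transfer the sum-of-kernels inclusion principle to an infinite series while controlling continuity and convergence on the compact range $[-R^2,R^2]$.
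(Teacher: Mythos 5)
The paper does not prove this theorem: it is imported verbatim from the cited reference (Christmann and Steinwart, Thm.~2.2), so there is no internal proof to compare against. Your reconstruction is essentially the argument given in that reference, and it is sound: each power $\langle u,v\rangle_{{\cal U}_\text{past}}^n$ is positive definite, the strictly positive coefficients let you invoke the sum-of-kernels inclusion (Aronszajn) to place every monomial in the coordinate functionals --- including the constants from the $n=0$ term --- inside ${\cal H}$, and Stone--Weierstrass on the compact set ${\cal U}_\text{past}$ finishes the density claim, with point separation supplied by the coordinate functionals of an orthonormal basis. Two small points of bookkeeping, both of which you already anticipate: the hypothesis that $\sigma$ equals its Taylor series on all of $\R$ disposes of your convergence caveat on $[-R^2,R^2]$ automatically, and you do not need the full identification of the RKHS of $\langle u,v\rangle_{{\cal U}_\text{past}}^n$ with the homogeneous polynomials --- mere containment of the monomials (e.g.\ via the feature map $u\mapsto u^{\otimes n}$ into the symmetric tensor power) suffices, which avoids the delicate infinite-dimensional classification you flag as the main obstacle.
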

% A universal kernel $k$ in $C^0({\cal U}_\text{past},\R)$ implies that: for any \(\epsilon>0\), there exist a finite set \(S\subset{\cal U}_{past}\) and coefficients \(\{\alpha_{i}\}\) such that
% \[
% \sup_{u\in{\cal U}_{past}}\Biggl|F(u)-\sum_{\hat{u}\in S}\alpha_{i}\, k(\hat{u},u)\Biggr| < \frac{\epsilon}{\sqrt{n}}.
% \]
Any universal kernel in $C^0(S,\R)$ induces a corresponding universal kernel in $C^0(S,\R^n)$.
{
\begin{theorem}
For any universal kernel $k$ in $C^0(S,\R)$ 
$$
k_e(u,v)=k(u,v)I
$$
is a universal kernel in $C^0(S,\R^n)$, where $I$ is the $n\times n$ identity matrix.
\end{theorem}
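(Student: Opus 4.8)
The plan is to reduce the vector-valued statement to the scalar universality of $k$ by working componentwise. Write any target $F \in C^0({\cal U}_\text{past}, \R^n)$ as $F = (F_1, \dots, F_n)$, where each scalar component $F_i \in C^0({\cal U}_\text{past}, \R)$ is continuous. Since $k$ is universal, its RKHS ${\cal H}_k$ is dense in $C^0({\cal U}_\text{past}, \R)$, so each $F_i$ can be approximated in the supremum norm by some $g_i \in {\cal H}_k$. The goal is then to assemble the $g_i$ into a single element $G = (g_1, \dots, g_n)$ of the vector-valued RKHS ${\cal H}_{k_e}$ generated by $k_e$, and to control the approximation error of $G$ to $F$ in the $C^0({\cal U}_\text{past}, \R^n)$ norm.

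The key step is to identify ${\cal H}_{k_e}$ explicitly. By definition, ${\cal H}_{k_e}$ is the closure of the span of the functions $k_e(\cdot, u)\alpha = k(\cdot, u)\alpha$ for $u \in {\cal U}_\text{past}$, $\alpha \in \R^n$. Writing $\alpha = \sum_i \alpha_i e_i$ in the standard basis $\{e_i\}$ of $\R^n$, each such generator decomposes as $\sum_i \alpha_i\, k(\cdot, u)\, e_i$, so the map $g \mapsto g\, e_i$ sending a scalar generator $k(\cdot,u)$ to the vector-valued generator $k(\cdot,u) e_i$ extends to an isometry of ${\cal H}_k$ onto the closed subspace of ${\cal H}_{k_e}$ whose elements vanish in all but the $i$-th coordinate; this is the Micchelli--Pontil characterization \autocite{micchelli_learning_2005} of $K = k I$, for which $\langle F, G\rangle_{{\cal H}_{k_e}} = \sum_i \langle F_i, G_i\rangle_{{\cal H}_k}$ and ${\cal H}_{k_e} = \{F : F_i \in {\cal H}_k \text{ for all } i\}$. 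In particular, for $g_i \in {\cal H}_k$ the function $G = \sum_i g_i\, e_i = (g_1, \dots, g_n)$ is a genuine element of ${\cal H}_{k_e}$.

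It then remains to estimate the error. Fix $\epsilon > 0$ and choose, by scalar universality, $g_i \in {\cal H}_k$ with $\sup_{u} |F_i(u) - g_i(u)| < \epsilon / \sqrt{n}$ for each $i$. Since $|\cdot|$ is the $l^2$ norm on $\R^n$, for every $u \in {\cal U}_\text{past}$,
$$
|F(u) - G(u)| = \Big( \sum_{i=1}^n |F_i(u) - g_i(u)|^2 \Big)^{1/2} \le \Big( \sum_{i=1}^n \frac{\epsilon^2}{n} \Big)^{1/2} = \epsilon,
$$
so $G$ approximates $F$ to within $\epsilon$ uniformly, establishing that ${\cal H}_{k_e}$ is dense in $C^0({\cal U}_\text{past}, \R^n)$, which is precisely universality of $k_e$.

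The main obstacle is the middle step: verifying that assembling components $g_i \in {\cal H}_k$ actually yields an element of ${\cal H}_{k_e}$, and not merely a continuous function that happens to be approximable by such. This is where the diagonal structure $k_e = k I$ is essential, because it makes the coordinate subspaces mutually orthogonal and identifies ${\cal H}_{k_e}$ with the $n$-fold orthogonal direct sum of ${\cal H}_k$; without this product structure the coordinates would be coupled and the componentwise reduction would break down. Once this characterization is in hand, the scalar approximation and the elementary $l^2$ estimate complete the argument.
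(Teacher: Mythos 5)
Your proof is correct and follows essentially the same route as the paper: decompose $F$ componentwise, invoke scalar universality for each $F_i$, and reassemble using the diagonal structure of $k_e = kI$. You are somewhat more careful than the paper in two places --- explicitly identifying ${\cal H}_{k_e}$ as the orthogonal direct sum of $n$ copies of ${\cal H}_k$ (the paper instead works directly with finite linear combinations $\sum_i \alpha_i k(u_i,\cdot)$, which sidesteps that identification), and carrying out the $\epsilon/\sqrt{n}$ uniform error estimate that the paper leaves implicit.
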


\begin{proof}
Let \(F\in C^0(S,\R^n)\) with components \(F=(F_1,\dots,F_n)^\top\). For each \(j\in\{1,\dots,n\}\) and any \(\epsilon>0\), universality of \(k\) implies that there exist an integer \(l_j\), centers \(A_j:=\{u_{j,i}\}_{i=1}^{l_j}\subset S\), and coefficients \(\{\alpha_{j,i}\}_{i=1}^{l_j}\subset\R\) such that
\[
\sup_{u\in S} \Biggl|F_j(u)-\sum_{i=1}^{l_j}\alpha_{j,i}\, k(u_{j,i},u)\Biggr| < \frac{\epsilon}{\sqrt{n}}.
\]
Define $\alpha_{j}(u_i)=\alpha_{j,i}, \forall u_i \in A_j$ and let
\[
A:=\bigcup_{j=1}^n A_j,
\]
and enumerate \(\{u_{m}\}_{m=1}^{M}:=A\), where $M=\text{Card}(A)$. For each \(j\), define
\[
\hat{F}_j(u)=\sum_{m=1}^{M}\tilde{\alpha}_{j}(u_m)\, k(u_{m},u),
\]
where \(\tilde{\alpha}_{j}(u_m)=\alpha_{j}(u_m)\) if \(u_{m}\in A_j\) and zero otherwise.
Let \(\hat{F}(u):=(\hat{F}_1(u),\dots,\hat{F}_n(u))^\top=\sum_{m=1}^{M}(\tilde{\alpha}_{1}(u_m),\dots,\tilde{\alpha}_{n}(u_m))\, k(u_{m},u)\). Then
\[
|F(u)-\hat{F}(u)|_2 \le \sqrt{n}\max_{1\le j\le n}\sup_{u\in {S}}|F_j(u)-\hat{F}_j(u)| < \epsilon.
\]
Since \(\hat{F}\) is a finite linear combination of the kernel sections \(k_e(u_m,\cdot)=k(u_m,\cdot)I\), it belongs to the RKHS of \(k_e\). As \(\epsilon>0\) was arbitrary and $M$ is finite, the RKHS of \(k_e\) is dense in \(C^0(S,\R^n)\).
\end{proof}}

A particular example of a universal kernel of {$C^0(S,\R^n)$} is the Gaussian-type
Radial Basis Function (RBF) kernel with $\sigma\neq0$:
$$
k_e(u,v)=I\exp\left(-\frac{\|u-v\|_{{\cal U}_\text{past}}^2}{2\sigma^2}\right).
$$
Using these results, if the input set $S$ is a compact subset of ${\cal U}_\text{past}$, we can use kernel methods to approximate any functional in $C^0(S,\R^{n})$ with a universal kernel. In the Appendix, we show that the set of bounded and rate-limited signals with a finite upper bound $C_2$ and Lipschitz constant $C_1$:
{
$$S_b=\{u\in {\cal U}_\text{past}|~|u(t_1)-u(t_2)|<C_1|t_1-t_2|,\sup_t |u(t)|<C_2\}$$
}
for some $0<C_1 , C_2<\infty$ and all $t_1,t_2\leq 0$, is a compact subset of ${\cal U}_\text{past}$ with a fading memory norm. The same set of signals was adopted in the original work of Boyd and Chua. 

\subsection{Kernel regularised learning of memory functional}

Kernel methods are popular in machine learning for their ability to {\it regularize} the data fit of the approximation.  

Given a finite data set of past inputs and current outputs, together with a fading memory kernel, the representer theorem provides an optimal approximant with the fading memory property.

\begin{theorem}[Representer Theorem \protect{\autocite[Thm. 8.7]{paulsen_introduction_2016}}]
Given  $l$ past inputs and current outputs $(u_i,y_i(0)) \in S_b \times \R^n$, let $\hil$ be an RKHS of functionals from $S_b$ to  $\R^n$. Then   the regularised least-squares problem
$$
\min_{\hat F \in {\cal H}} \sum_{i=1}^{l}|\hat{F}u_{i}-y_i(0)|^2+\gamma\|\hat{F}\|^2_{\hil}.
$$ 
has a unique solution given by
$$
\hat{F}u=\sum_{i=1}^{l}\alpha_i^\top k(u_{i},u),\quad \hat{\alpha}=(K+\gamma I)^{-1}Y,
$$
{ where $\hat{\alpha}$ is a matrix whose rows are $\alpha_i$, $K_{i,j}=k(u_{i},u_{j})$ and $Y$ is a matrix whose rows are $y_i(0)$.}
\end{theorem}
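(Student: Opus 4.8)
The plan is to follow the classical orthogonality argument behind the representer theorem, adapted to the vector-valued RKHS $\hil$. Let $\mathcal{V} \subseteq \hil$ denote the finite-dimensional subspace spanned by the kernel sections at the data points, $\mathcal{V} = \operatorname{span}\{K(\cdot, u_i)\alpha : i = 0, \dots, l,\ \alpha \in \R^n\}$, and decompose any candidate $\hat F \in \hil$ orthogonally as $\hat F = \hat F_\parallel + \hat F_\perp$ with $\hat F_\parallel \in \mathcal{V}$ and $\hat F_\perp \in \mathcal{V}^\perp$. The whole proof hinges on showing that the orthogonal part $\hat F_\perp$ can only hurt the objective, never help it.

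First I would show that the data-fitting term depends only on $\hat F_\parallel$. By the reproducing property, $\langle \alpha, \hat F u_i\rangle = \langle \hat F, K(\cdot, u_i)\alpha\rangle_{\hil}$, and since $K(\cdot, u_i)\alpha \in \mathcal{V}$ while $\hat F_\perp \perp \mathcal{V}$, this reduces to $\langle \alpha, \hat F u_i\rangle = \langle \alpha, \hat F_\parallel u_i\rangle$ for every $\alpha \in \R^n$, hence $\hat F u_i = \hat F_\parallel u_i$. Consequently the sum $\sum_i |\hat F u_i - y_i(0)|^2$ is unchanged if we discard $\hat F_\perp$. Meanwhile, by the Pythagorean identity in $\hil$, $\|\hat F\|_{\hil}^2 = \|\hat F_\parallel\|_{\hil}^2 + \|\hat F_\perp\|_{\hil}^2 \geq \|\hat F_\parallel\|_{\hil}^2$, with equality iff $\hat F_\perp = 0$. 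Since the penalty is strictly increasing in $\|\hat F\|_{\hil}$ while the fitting term is insensitive to $\hat F_\perp$, every minimizer must satisfy $\hat F_\perp = 0$. This confines the search to $\mathcal{V}$, so the optimal $\hat F$ has the representation $\hat F u = \sum_{i=0}^{l} \alpha_i k(u_i, u)$ with coefficients $\alpha_i \in \R^n$ (using $k_e = kI$).

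Next I would substitute this finite expansion back into the objective to reduce it to a finite-dimensional convex program in the stacked coefficient vector $\hat\alpha$. Applying the reproducing property, the fitting term becomes $\|K\hat\alpha - Y\|^2$ with Gram matrix $K_{i,j} = k(u_i, u_j)$, and the squared norm becomes $\hat\alpha^\top K \hat\alpha$ (in block form with the identity factor from $k_e = kI$). Minimizing the quadratic $\|K\hat\alpha - Y\|^2 + \gamma\, \hat\alpha^\top K \hat\alpha$ by setting the gradient to zero gives $K\bigl((K + \gamma I)\hat\alpha - Y\bigr) = 0$, which is solved by $\hat\alpha = (K + \gamma I)^{-1} Y$. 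Because $K$ is positive semidefinite and $\gamma > 0$, the matrix $K + \gamma I$ is positive definite, hence invertible, and the reduced objective is strictly convex, which yields both existence and uniqueness of the minimizer.

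The main obstacle I anticipate is reconciling the closed-form solution with the penalty as written: the formula $\hat\alpha = (K + \gamma I)^{-1} Y$ is the minimizer of the problem with the \emph{squared} regularizer $\gamma\|\hat F\|_{\hil}^2$, whereas the statement displays $\gamma\|\hat F\|_{\hil}$. I would therefore carry out the finite-dimensional step with the squared norm; crucially, the orthogonality argument of the second paragraph is identical in either case, since it uses only that the penalty is monotone in $\|\hat F\|_{\hil}$. A secondary technical point is that the reproducing property and Pythagoras must be invoked in the matrix-valued RKHS $\hil$ rather than a scalar one, but this is routine given the section $K(\cdot, u)\alpha$ and the inner product $\langle\cdot,\cdot\rangle_{\hil}$ already in hand.
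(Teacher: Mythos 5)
The paper does not actually prove this theorem: it is imported by citation from Paulsen and Raghupathi (Thm.~8.7), so there is no in-paper proof to compare against. Your argument is the standard and correct one for establishing it: orthogonal decomposition of $\hat F$ against the span of the kernel sections $K(\cdot,u_i)\alpha$, the reproducing property to show the data-fit term sees only $\hat F_\parallel$, Pythagoras to kill $\hat F_\perp$, and then the finite-dimensional normal equations. You were also right to flag that the statement as printed is internally inconsistent --- the displayed penalty is $\gamma\|\hat F\|_{\cal H}$ but the closed form $\hat\alpha=(K+\gamma I)^{-1}Y$ is the minimizer for $\gamma\|\hat F\|_{\cal H}^2$; your resolution (run the reduction with the squared norm, noting the orthogonality step only needs monotonicity of the penalty in $\|\hat F\|_{\cal H}$) is the right one and matches the cited source.

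One small imprecision: the reduced objective $\|K\hat\alpha-Y\|^2+\gamma\,\hat\alpha^\top K\hat\alpha$ is strictly convex in $\hat\alpha$ only when $K$ is nonsingular (its Hessian is $2K(K+\gamma I)$, which is singular whenever $K$ is). Uniqueness should instead be argued at the level of the functional: the full objective is strictly convex in $\hat F\in{\cal H}$ because $\gamma\|\cdot\|_{\cal H}^2$ is, and coefficient vectors differing by an element of $\ker K$ represent the same functional since $\beta^\top K\beta=0$ implies $\sum_i\beta_i k(u_i,\cdot)=0$ in ${\cal H}$. The particular choice $\hat\alpha=(K+\gamma I)^{-1}Y$ is always well defined and satisfies the stationarity condition, so the stated formula is correct; only your justification of uniqueness needs this adjustment.
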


The regularised least squares problem  minimizes the sum of an empirical loss and the norm of the functional.  Minimizing the empirical loss provides a good data fit whereas minimizing the functional norm gives better generality and robustness to noise in the data. The regularization parameter $\gamma$ controls the trade-off between data fit and regularization. 

With regularization on the fading memory functional, we can also impose the incremental small-gain property on the learned system. In this part, we will first show the relationship between the Lipschitz constant of the functional and the incremental gain of the system and hence prove that by regularizing the Lipschitz constant of the approximated functional, we can ensure that the approximated system has incremental small-gain.

First, we define \emph{Lipschitz Kernels}.
\begin{definition}A kernel $k(u,v)$ is said to have \emph{Lipschitz constant $r$} if
 \begin{equation}
 \|k(u,u)-2k(u,v)+k(v,v)\|_{{\cal B}(\R^n)}\le r^2\|u-v\|_{{\cal U}_\text{past}}^2, \label{eq:continous_kernel}
 \end{equation} for some $r>0$.
 It is said to be a nonexpansive kernel in \cite{van_waarde_kernel-based_2023} when $0< r\le1$.
 \end{definition}
In addition, if a kernel has Lipschitz constant $r$, then
 $
 |\hat{F}u-\hat{F}v|\le r\|\hat{F}\|_{\hil }\|u-v\|_{{\cal U}_\text{past}}.
 $  
This is a direct application of \autocite{van_waarde_kernel-based_2023}[Lemma 1].
 Van Waarde and Sepulchre \autocite{van_waarde_kernel-based_2023}[Prop. 2] provide several examples of kernels which satisfy property~\eqref{eq:continous_kernel}. In particular, the Gaussian-type RBF-kernel is a universal kernel with Lipschitz constant $r=\frac{2}{\sigma^2}$. We now give a condition on the Lipschitz constant of the functional which guarantees an incremental small-gain operator.

 {
\begin{theorem} [Incremental small-gain regularization]\label{cor:regularization}
Let ${\cal U}_\text{past}$ be equipped with the norm $\left\|\cdot\right\|_{{\cal U}_\text{past}}^2=\left\|\cdot\right\|_{w,2}^2$. Let $c \coloneqq \left(\int_{-\infty}^{0}w(t)^2dt\right)^{\frac{1}{2}}$ and $G: {\cal U} \to {\cal Y}$ with a memory functional $F:{\cal U}_\text{past}\to \R^n$ that is Lipschitz continuous with respect to the norm $\left\|\cdot\right\|_{{\cal U}_\text{past}}$ with Lipschitz constant $\beta$, then $G$ has incremental small-gain if $\beta^2<\frac{1}{c^2}$.
\end{theorem}

\begin{proof}
Let $u_c,v_c\in{\cal U}$ and $ \left\|u_c\right\|_2,\left\|v_c\right\|_2<\infty$, let $u(t,\cdot)$ and $v(t,\cdot)$ denote the truncated $u_c$ and $v_c$ at time $t$ such that 
$$
u(t,\tau)=
\begin{cases}
    u_c(t+\tau) & \tau\le 0 \\
    0 & \tau>0.
\end{cases}
$$
It follows that $Pu_{t}, Pv_{t}\in {\cal U}_\text{past}$.  We then have
\begin{IEEEeqnarray*}{rCl}
\|Gu_c-Gv_c\|_2^2&=&\int_{-\infty}^{\infty}|Gu_c(t)-Gv_c(t)|^2 dt\\
&=&\int_{-\infty}^{\infty}|FPu(t,\cdot)-FPv(t,\cdot)|^2 dt\\
&\leq&\int_{-\infty}^{\infty}\beta^2\|Pu(t,\cdot)-Pv(t,\cdot)\|_{{\cal U}_\text{past}}^2 dt\\
\end{IEEEeqnarray*}
% &=&\int_{-\infty}^{\infty}\beta^2 \|u_{t}-v_{t}\|_{w,2}^2 dt
\vspace{-20pt}
%\begin{eqnarray}
$$
=\int_{-\infty}^{\infty}\beta^2 \int_{-\infty}^{0}|u_{c}(t+\tau)-v_{c}(t+\tau)|^2w(\tau)^2d\tau dt.
$$
Let $w'(\tau)=w(\tau)$ for $\tau\le0$ and $w'(\tau)=0$ for $\tau>0$. Then
$$\|Gu_c-Gv_c\|_2^2\le\int_{-\infty}^{\infty}\beta^2 \int_{-\infty}^{\infty}|u_c(\tau)-v_c(\tau)|^2w'(t+\tau)^2d\tau dt$$
Since $w'$ is a square integrable function, by Fubini's theorem,
$$\|Gu_c-Gv_c\|_2^2\le\int_{-\infty}^{\infty}\beta^2 |u_c(\tau)-v_c(\tau)|^2\int_{-\infty}^{\infty}w'(t+\tau)^2dt d\tau.$$
Note that $\int_{-\infty}^{\infty}w'(t+\tau)^2dt=c^2$, so we have
$$\|Gu_c-Gv_c\|_2^2\le c^2 \int_{-\infty}^{\infty}\beta^2 |u_c(\tau)-v_c(\tau)|^2d\tau
=\beta^2c^2\|u_c-v_c\|_2^2.$$
If $\beta^2<\frac{1}{c^2} $, it follows that the operator $G$ has incremental small-gain.
\end{proof}
}
To impose incremental small-gain on the approximated system, as described in Theorem \ref{cor:regularization}, it suffices to choose a sufficiently large value for $\gamma$. Obviously, there exists a trade-off between the size of the bias $\gamma$  and the incremental  gain of the approximated system: enforcing a smaller gain requires  a larger bias. We can also use this constraint to impose other IQC properties via the scattering transform, for more details, we refer the reader to \cite{van_waarde_kernel-based_2023}.
\begin{figure}[H]
    \centering
    \includegraphics[scale=0.4]{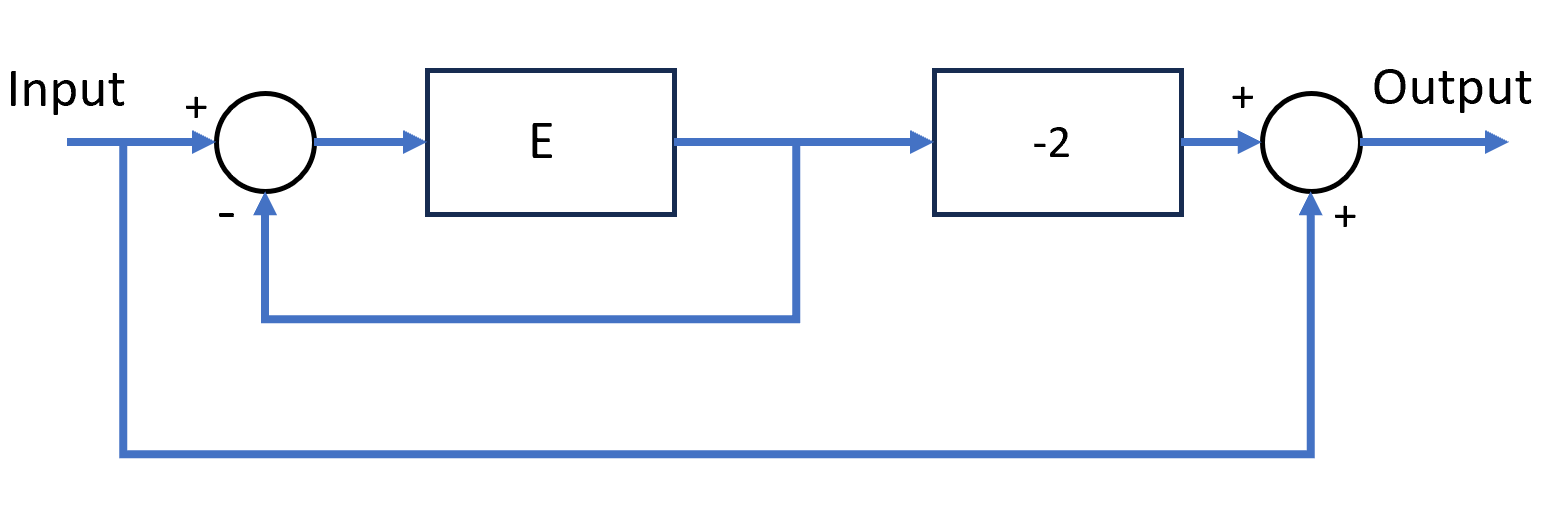}
    \caption{Block diagram implementation of scattering transform in \cite[Fig.~4(a)]{anderson_small-gain_1972}. $E$ is a incremental small-gain system and the overall system is an incremental passive system}\label{fig:scattering}
\end{figure}
\subsection{Regularised learning with IQC constraints} \label{sec:incremental small-gain_pass}

A system $G$ is \emph{incrementally passive} if its input space ${\cal X}$ and output space ${\cal Y}$ are the same Hilbert space and $\langle Gx_1-Gx_2, x_1-x_2\rangle \geq 0$ for all $x_1,x_2\in {\cal X}$
\begin{theorem}[Scattering transform between incremental small-gain system and incrementally passive system]
    If a system $E$ is an incremental small-gain and its input space ${\cal X}$ and output space ${\cal Y}$ are the same Hilbert space, then $G:=(I-E)(I+E)^{-1}$ is incrementally passive. If a system $G$ is incrementally passive, then $E:=(G-I)(G+I)^{-1}$ has incremental small-gain.
\end{theorem}
Similarly to \cite{van_waarde_kernel-based_2023}, we can convert the small-gain constraint to other IQCs via a scattering transform. For more detail, we refer the interested reader to \cite{van_waarde_kernel-based_2023}[Thm. 4]. Due to space limitations, in this paper, we will focus on the illustration of incremental passivity. The implementation of the scattering transform can be accomplished through a feedback circuit \cite{anderson_small-gain_1972}, as depicted below:
To impose incremental passivity, we can utilize the scattering transform to convert the input-output data of system $G$ into the input-output data of the incremental small-gain system $E$. 
Let $\{u_i,y_i\}^N_i$ represent $N$ data points from system $G$, where $u_i,y_i \in \mathcal{U}_\text{past}$. After applying the scattering transform to the data points, we obtain new data point pairs $\{u_i',y_i'\}^N_i$, where $u_i'=\frac{1}{2}(u_i+y_i)$ and $y_i'=\frac{1}{2}(u_i-y_i)$. To acquire data points for learning the fading memory functional, which maps the past signal to the current output, we truncate the $y_i$ to $y_i(0)$. Next, we fit this transformed data to the kernel model using a sufficiently large value of $\gamma$ to ensure incremental small-gain. Finally, by connecting $E$ as depicted in Figure \ref{fig:scattering}, we can recover the original system $G$, benefiting from the incremental small-gain property of $E$, resulting in an incrementally passive approximation. However, for a strictly proper LTI system $G$, the transformed system $E$ might not be strictly proper. Hence this system is not continuous with respect to the $L_2$ fading memory norm defined previously. {Instead of equipping the input space with the $L_2$ fading memory
norm, we define ${\cal W_\text{union}}\subset{\cal W}_{w,2}\cap{\cal W}_{\delta,\infty}$ with norm $ a\|\cdot\|_{w,2}^2 + b\|\cdot\|_{\delta,\infty}^2$, where $\delta(0)=1$, $\delta(t)=0$ for $t<0$ and equip ${\cal U}_\text{past}$ with this norm. Compared to the normal $L_2$ norm, this norm takes into account the instantaneous effect at time 0, which is the property of a biproper system. This norm can be expressed as an $L_2$-type norm combining an $L_2$ term and a pointwise term as $\|x\|^2=a\|x\|_{w,2}^2 + b|x(0)|^2$. With slight modifications, Theorem \ref{cor:regularization} remains applicable to this norm as well.
\begin{corollary}
    Let the input space ${\cal U}_\text{past}$ be equipped with norm $\|x\|^2=a\|x\|_{w,2}^2 + b|x(0)|^2$, $c \coloneqq \left(\int_{-\infty}^{0}w(t)^2dt\right)^{\frac{1}{2}}$. Suppose a system $G$ has a memory functional $F:{\cal U}_\text{past}\to \R^n$ that is Lipschitz continuous with respect to the above norm with Lipschitz constant $\beta$. Then
    $$
    \|Gu_c-Gv_c\|_{L_2}^2<\beta^2(b+ac^2)\|u_c-v_c\|_{L_2}^2,$$ and hence $G$ has incremental small-gain if $\beta^2<\frac{1}{b+ac^2}$.
\end{corollary}The proof of the above corollary is the same as Theorem \ref{cor:regularization}. Also note that the set $S_b$ is compact with respect to this norm as it is compact with respect to $\|\cdot\|_{w,2}$ (shown in the Appendix) and $\|\cdot\|_{w,\infty}$ (shown in \cite{boyd_fading_1985}), hence we can use a kernel associated with this norm to approximate the memory functional.} The bilinear kernel and the RBF kernel associated with this norm are defined as follows, for positive parameters $a$ and $b$.
 \begin{definition}{Modified bilinear kernel}
    $$k(x,y)=a\langle x,y\rangle_{w,2} + bx(0)y(0) \qedhere$$
\end{definition}
\begin{definition}{Modified RBF kernel}
    $$k(x,y)=\exp\left(\frac{-(a\|x-y\|_{w,2}^2 + b|x(0)-y(0)|^2)}{2\sigma^2}\right). \qedhere$$
\end{definition}

\subsection{Comparison with previous frameworks}

{ Most previous models \cite{boyd_fading_1985,maass_neural_2000,maass_computational_2004,grigoryeva_echo_2018} that approximate fading memory systems rely on the same conceptual construction: compactness of the set of past signals $S_b$  guarantees uniform convergence of the approximation by the classical Stone-Weierstrass theorem. The methods of approximation then follow these steps:
\begin{enumerate}
    \item Project data points from compact set $S_b$ to $\mathbb{R}^N$, where $N$ is chosen large enough to define a bijective continuous map $\phi: S_b \to \mathbb{R}^N$.
    \item Approximate the function from the projected data in $\mathbb{R}^N$ to the outputs in $\mathbb{R}^{n}$ using a universal approximator, such as a polynomial, sigmoid gate, or kernel. This produces a function $\hat{f}: \mathbb{R}^N \to \mathbb{R}^N$.
    \item Obtain the approximated functional $\hat{F} = \hat{f} \circ \phi$.
\end{enumerate}

All the methods above suffer from the same limitation: the bijection between the infinite dimensional set of signals and the finite-dimensional state-space is defined theoretically but not algorithmically. Secondly, all the methods suffer the limitation of nonlinear state-space models to capture incremental input-output properties. This limitation has been a key motivation for the present work, see \cite{sepulchre_incremental_2022} for further details.

A key novelty  in the present paper is to bypass the problem of
finding a bijection from the data set to state-space $\phi: S_b \to \R^N$. Furthermore, as previously discussed and will be continually discussed in the rest of the paper, the kernel framework provides a versatile approximation methodology to encode  input-output properties in the model structure.
}

\section{Examples}\label{sec:Kernel_example}
In this section, we illustrate the proposed method on elementary examples\footnote{The code for these examples can be found at \url{https://github.com/Huoyongtony/regularised-Learning-of-Nonlinear-
Fading-Memory-Systems}.}.
\subsection{LTI examples}
\begin{example}[Choosing different fading memory normed spaces and regularization constants]
To illustrate the role of the fading memory norm, we apply the kernel approximation to the LTI system $$H(s)=\frac{s+1}{(s+3)(s+10)}.$$ Using our prior knowledge that the system is linear, we choose a linear model defined by a bilinear kernel. We assume the input space is a fading memory normed space with an exponential fading memory weight, $w(t)=e^{\frac{\lambda}{2} t}$: $$\|u\|_{{\cal U}_\text{past}}=\|u\|_w=\left(\int_{-\infty}^{0}u(t)^2e^{\lambda t}dt\right)^{\frac{1}{2}}. $$ The corresponding inner product is $$\langle u,v \rangle_w=\int_{-\infty}^{0}u(t)v(t)e^{\lambda t}dt,$$ which also defines the {\it canonical} bilinear kernel. The following experiment shows the effect of varying the rate $\lambda$. 

To generate the training data set, we simulate the LTI system with 100 sine waves of different angular frequencies ($e^{-5}$ to $e^5$ rad/s). We consider the output value of each response at $t = 2$ s and the corresponding past input in the last two seconds as training data. We then test the step response of the approximated system. The results of this experiment are shown in Figure~\ref{fig:full}.  If the fading memory decays more slowly than the slowest pole of the system, given enough data, the approximation is always accurate. If the fading memory  decays faster than the slowest pole, the approximation becomes less accurate but still captures the fast dynamics. If the fading memory decays faster than all poles, the approximation error is large. We then repeat the experiment with fixed $\lambda=1$, and vary the regularization constants. The result is illustrated in Figure~\ref{fig:full_reg}. {As expected, stronger regularization enforces a smaller $L_2$ gain in the approximated system.}

\begin{figure}
    \centering
    \includegraphics[scale=0.6]{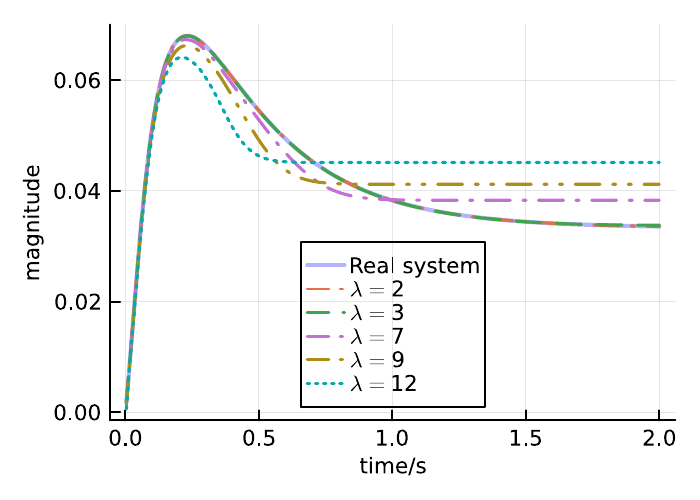}
    \caption{Example 1. Approximating LTI with linear kernel under different normed spaces, solid line is real system, dashed lines are approximated systems.}\label{fig:full}
\end{figure}
\begin{figure}
    \centering
    \includegraphics[scale=0.6]{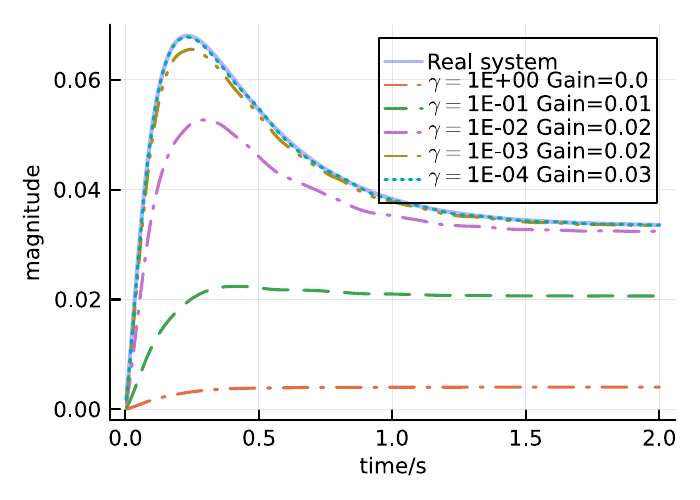}
    \caption{Example 1. Approximating LTI with linear kernel under different regularization constants, solid line is real system, dashed lines are approximated systems.}\label{fig:full_reg}
\end{figure}

\end{example}
\begin{example}[Enforcing passivity on a non-passive linear system]
{In this example, we apply our framework to identify a passive linear system from data generated by the non-passive system 
$$H(s)=\frac{10}{(s+1)(s+2)}.$$ For this we need to use the modified bilinear kernel mentioned in~\ref{sec:incremental small-gain_pass} due to the scattering transform.

To generate the training data set, setting all initial states to zero, we simulate system $H$ with one impulse and 100 sine waves of different frequencies ($e^{-1}$ to $e^3$ rad/s) with unit amplitude for 30 seconds. We consider the output value of each response at $t = 30$ s and the corresponding past input in the last 4 seconds as training data. Afterward we transform the data with the scattering transform and learn the new system with a small-gain constraint. We set $a=b=1$, $\gamma=14.8$ and $\lambda=0.5$. In this setting the system norm is $0.999$. We then recover the system with the scattering transform  as shown in Figure~\ref{fig:scattering}, and plot the Nyquist plot of the approximated system, as illustrated in Figure~\ref{fig:scatter_linear}.} We see that the approximated system is not just passive but about 1 unit from the imaginary axis. This means the bound we derived is a bit conservative.

\end{example}

\begin{figure}
    \centering
    \includegraphics[scale=0.6]{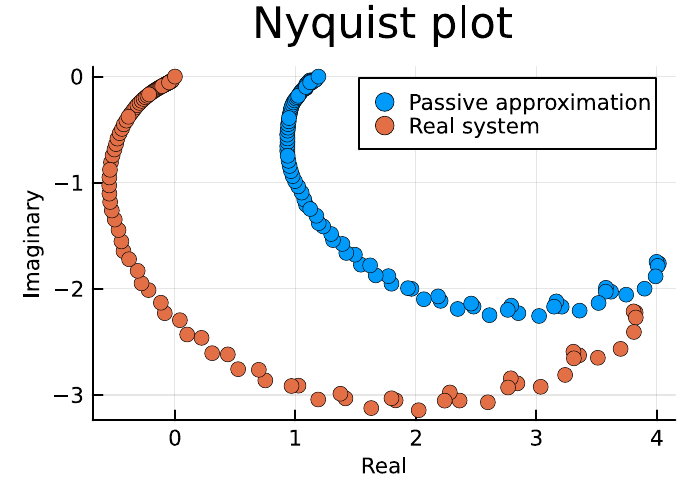}
    \caption{Example 2. Approximating LTI with linear kernel and incremental passivity constraint.}\label{fig:scatter_linear}
\end{figure}

\subsection{Nonlinear examples}
\begin{example}[Kernel regularised learning of nonlinear systems]\label{simple nonlinear system}
We consider a saturated first-order lag with equation
\begin{IEEEeqnarray*}{rCl}
\dot{x} &=& -5x+u,\\
y&=&\operatorname{sat}(x),\\
\operatorname{sat}(x)&=&
\begin{cases}
-5 & x< -5\\
x & |x|\le5\\
5 & x> 5.
\end{cases}\\
% \epsilon &\sim&N(\mu=0,\sigma^2=0.1)
\end{IEEEeqnarray*}
% We also add multiplicative white noise with variance $0.1$ to the output. 
We choose an exponential fading memory with $\lambda=4$.
Again, we probe the system with sine waves of different frequencies (angular frequency 0-30 rad/s) and amplitudes (0-50). We take three points from the output, at times $0.666$ s, $1.333$ s and $2$ s, and their corresponding past input in the last 2 seconds from each response as training data. The system was approximated with a Gaussian RBF kernel ($\sigma=0.25$). 
\end{example}

\begin{figure}
    \centering
    \includegraphics[scale=0.6]{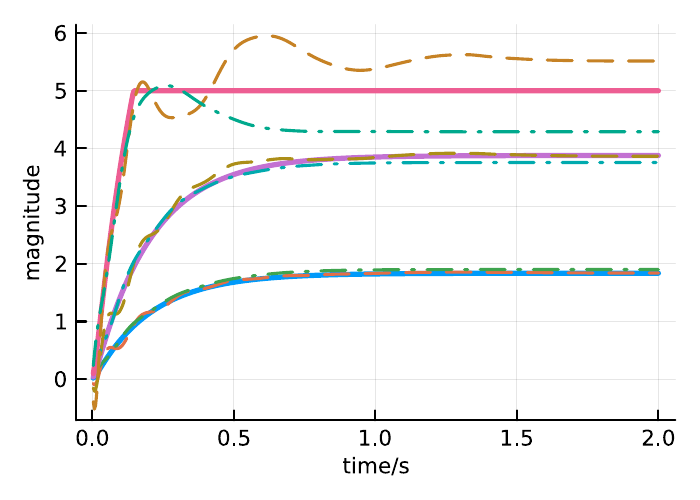}
    \caption{Example 3. Solid lines are outputs of the original system. Dashed lines are outputs of an RBF kernel approximation under $\gamma=0.01$. Dash-dot lines are outputs of an RBF kernel approximation under $\gamma=1e^{-5}$.}
    \label{fig:reg1}
\end{figure}

Figure~\ref{fig:reg1} shows the output of the original and approximate systems when subject to step inputs, for two different values of the regularization constant $\gamma$.  
The step response experiments illustrate that a higher regularization constant favors a smoother input-output behaviour. This is expected as we have proved that regularization can control the Lipschitz constant $\beta$. Moreover, a higher regularization constant gives better generality and better robustness to noise in the data. For the value $\gamma=0.01$, the Lipschitz constant of the approximated operator is about 11.

\begin{example}[Enforcing incremental passivity on a non-passive nonlinear system]
{The system in Example \ref{simple nonlinear system} is not incrementally passive \cite{Kulkarni2001}. We construct an incrementally passive approximation as follows. 

Setting all initial states to zero, we collect the responses of the system to 10 sine wave inputs, each with frequency 10 Hz and duration 2 s. Each sine wave has different magnitude, from 15 to 150. The simulation time step is 0.002 s. We then sample these responses every 10 time steps and apply the scattering transform to generate the training data. Afterward, we learn an incremental small-gain system with a modified RBF kernel mentioned in~\ref{sec:incremental small-gain_pass} on the scattering transformed data.}

{In this case, we set $\sigma=1$, $a=b=0.1$, $\gamma=0.0005625$ and $\lambda=10$. We scale both our input and output data after the scattering transform by $0.0005$. This  scaling is used by \cite{van_waarde_kernel-based_2023} so that the algorithm gives a less conservative bound. 
%Currently, we have no theory on how to choose this scale to achieve a less conservative bound but this might work out in future researches. 
In this setting, the trained model has $\beta^2(b+ac^2)=0.998$ implying that the model has incremental small-gain.

Finally, we use one of the training data to test if we can recover the incrementally passive approximation using the scattering transform, by putting the trained operator in the feedback loop shown in Figure~\ref{fig:scattering}.
The result is shown in Figure~\ref{fig:recovered_operator}.} We can see that the output of the incrementally passive approximation is similar to the output of the linear part in the non-incrementally passive nonlinear system with a small phase shift.

\begin{figure}
    \centering
    \includegraphics[scale=0.6]{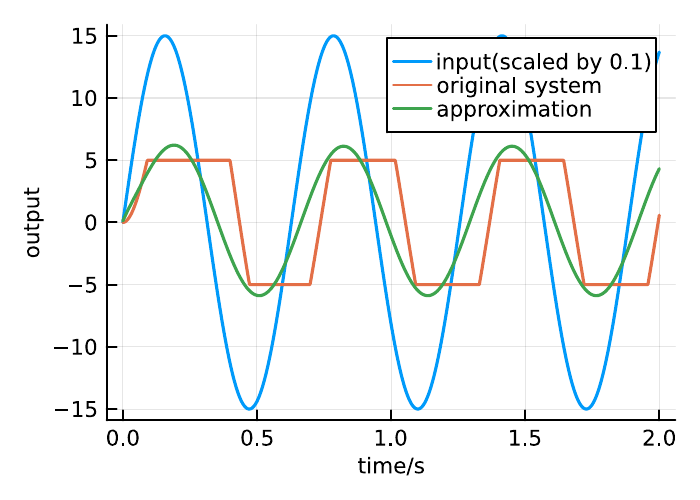}
    \caption{Example 4. Approximating Nonlinear system with RBF kernel and incremental passivity constraint. Response to sine wave with frequency 10 Hz and amplitude 150}\label{fig:recovered_operator}
\end{figure}

\end{example}
\section{Conclusion}\label{sec:conclusions}
{\color{black}
In this paper, we proposed a kernel-based framework for modelling and identification of nonlinear fading memory systems.  The proposed framework departs from the state-space representations that have dominated the literature on fading memory systems. In contrast, we use a kernel to model  the memory functional. Kernel approximations are well suited for regularised learning and to encode prior knowledge about input-output properties. We showed how regularization can be used to impose IQC properties. Future work will explore how to enforce arbitrary incremental integral quadratic constraints by using regularization techniques and/or special kernels.
}
\appendices
{\color{black}
\section{Compactness with respect to the fading memory norm}
\textcite{boyd_fading_1985} observed that the set of bounded and rate-limited input signals with a finite upper bound is compact with respect to a fading memory norm.  In this section, we develop this result for arbitrary fading memory normed spaces.

The following lemma generalizes \autocite[Corollary 2.7]{grigoryeva_echo_2018} to $L_2$ norms in continuous time.
{
\begin{lemma} $S_b$ is a compact subset of the normed space with the norm $\|\cdot\|_{w,2}$. \end{lemma}
\begin{proof} 
To show this, we need to show that every sequence in $S_b$ has a convergent sub-sequence in the sense of $\|\cdot\|_{w,2}$.  \textcite[Lem. 1]{boyd_fading_1985} shows that every sequence in $S_b$ has a sub-sequence which converges in the infinity norm $\|\cdot\|_{w,\infty}$ when the output dimension $m=1$. This can be extended to $m>1$ by combining the converging sub-sequence in each coordinate to construct a sub-sequence that converge in all coordinates. We now show that such a sub-sequence converges in the $L_2$ norm as well.
Consider a sequence $u_{n} \in S_b$, and let $u_{n_k}$ be a converging sub-sequence: $\exists u_{0} \in S_b$  
$$
\|u_{n_k}-u_{0}\|_{w,\infty}\to 0, k\to \infty.
$$
By Hölder's inequality,
$$
\|1(u_{n_k}-u_{0})^2\|_{w,1}\leq\|1\|_{w,q/(q-2)}\|(u_{n_k}-u_{0})^2\|_{w,q/2},
$$
for any $2<q<\infty$. Since we defined $w$ to have finite $L_q$ norm for all $1\leq q \leq \infty$, we can let $C= \sup_q{\|1\|_{w,q/(q-2)}}$. Rearranging, we get,
$$\|u_{n_k}-u_{0}\|_{w,2}\leq C^{\frac{1}{2}}\|u_{n_k}-u_{0}\|_{w,q},$$
$$\|u_{n_k}-u_{0}\|_{w,2}\leq  \lim_{q\to \infty} C^{\frac{1}{2}}\|u_{n_k}-u_{0}\|_{w,q}=C^{\frac{1}{2}}\|u_{n_k}-u_{0}\|_{w,\infty}.$$
Hence, $S_b$ is a compact subset with respect to $\|\cdot\|_{w,2}$.\qedhere
\end{proof}
}

\printbibliography

@article{anderson_small-gain_1972,
	title = {The small-gain theorem, the passivity theorem and their equivalence},
	volume = {293},
	issn = {0016-0032},
	url = {https://www.sciencedirect.com/science/article/pii/0016003272901500},
	doi = {10.1016/0016-0032(72)90150-0},
	abstract = {Two small-gain theorems and a passivity theorem yielding stability results for general feedback systems are shown to be equivalent.},
	pages = {105--115},
	number = {2},
	journaltitle = {Journal of the Franklin Institute},
	shortjournal = {Journal of the Franklin Institute},
	author = {Anderson, B. D. O.},
	urldate = {2023-04-05},
	date = {1972-02-01},
	langid = {english},
	file = {ScienceDirect Full Text PDF:C\:\\Users\\huoyo\\Zotero\\storage\\9J6AS7TS\\Anderson - 1972 - The small-gain theorem, the passivity theorem and .pdf:application/pdf;ScienceDirect Snapshot:C\:\\Users\\huoyo\\Zotero\\storage\\FQMP64UX\\0016003272901500.html:text/html},
}

@inproceedings{Kulkarni2001,
  ids = {Kulkarni2001a},
  title = {Incremental Positivity Non-Preservation by Stability Multipliers},
  booktitle = {Proceedings of the 40th {{IEEE Conference}} on {{Decision}} and {{Control}}},
  author = {Kulkarni, V. V. and Safonov, M. G.},
  year = {2001},
  volume = {1},
  pages = {33-38 vol.1},
  doi = {10.1109/CDC.2001.980064},
  abstract = {It is proved that stability multipliers such as Zames-Falb multipliers, Popov multipliers and RL/RC multipliers, known to preserve positivity of monotone memoryless nonlinearities, do not, in general, preserve their incremental positivity. Our result implies that stability results based on these multipliers should be interpreted with caution since without incremental positivity, continuity of the closed-loop system's input-output relation may not be assured.},
  keywords = {closed-loop system,continuity,Convolution,feedback,Feedback,Fourier transforms,incremental positivity nonpreservation,input-output relation,integral quadratic constraint,nonlinear stability,Popov multipliers,RL/RC multipliers,Signal mapping,stability,stability analysis,Stability analysis,stability multipliers,Zames-Falb multipliers},
  file = {/Users/tom/Dropbox (Cambridge University)/control/Zotero attachments/Kulkarni/Kulkarni2001.pdf;/Users/tom/Dropbox (Cambridge University)/Zotero attachments/storage/2HPAWNNP/980064.html}
}

@article{Bottegal2016,
  title = {Robust {{EM}} Kernel-Based Methods for Linear System Identification},
  author = {Bottegal, Giulio and Aravkin, Aleksandr Y. and Hjalmarsson, Håkan and Pillonetto, Gianluigi},
  date = {2016-05-01},
  journaltitle = {Automatica},
  shortjournal = {Automatica},
  volume = {67},
  pages = {114--126},
  issn = {0005-1098},
  doi = {10.1016/j.automatica.2016.01.036},
  url = {https://www.sciencedirect.com/science/article/pii/S0005109816000376},
  urldate = {2023-03-29},
  abstract = {Recent developments in system identification have brought attention to regularized kernel-based methods. This type of approach has been proven to compare favorably with classic parametric methods. However, current formulations are not robust with respect to outliers. In this paper, we introduce a novel method to robustify kernel-based system identification methods. To this end, we model the output measurement noise using random variables with heavy-tailed probability density functions (pdfs), focusing on the Laplacian and the Student’s t distributions. Exploiting the representation of these pdfs as scale mixtures of Gaussians, we cast our system identification problem into a Gaussian process regression framework, which requires estimating a number of hyperparameters of the data size order. To overcome this difficulty, we design a new maximum a posteriori (MAP) estimator of the hyperparameters, and solve the related optimization problem with a novel iterative scheme based on the Expectation–Maximization (EM) method. In the presence of outliers, tests on simulated data and on a real system show a substantial performance improvement compared to currently used kernel-based methods for linear system identification.},
  langid = {english},
  keywords = {EM method,Kernel-based methods,MAP estimate,Outliers,System identification},
  file = {/Users/tlc37/Dropbox (Cambridge University)/Zotero attachments/Bottegal et al2016/Bottegal et al2016.pdf;/Users/tlc37/Zotero/storage/3SV5NA82/S0005109816000376.html}
}

@article{Bouvrie2017,
  title = {Kernel {{Methods}} for the {{Approximation}} of {{Nonlinear Systems}}},
  author = {Bouvrie, Jake and Hamzi, Boumediene},
  date = {2017-01},
  journaltitle = {SIAM Journal on Control and Optimization},
  shortjournal = {SIAM J. Control Optim.},
  volume = {55},
  number = {4},
  pages = {2460--2492},
  issn = {0363-0129, 1095-7138},
  doi = {10.1137/14096815X},
  url = {https://epubs.siam.org/doi/10.1137/14096815X},
  urldate = {2023-03-29},
  abstract = {We introduce a data-driven model approximation method for nonlinear control systems, drawing on recent progress in machine learning and statistical-dimensionality reduction. The method is based on embedding the nonlinear system in a high- (or infinite-) dimensional reproducing kernel Hilbert space (RKHS) where linear balanced truncation may be carried out implicitly. This leads to a nonlinear reduction map which can be combined with a representation of the system belonging to an RKHS to give a closed, reduced order dynamical system which captures the essential input-output characteristics of the original model. Working in RKHS provides a convenient, general functional-analytical framework for theoretical understanding. Empirical simulations illustrating the approach are also provided.},
  langid = {english},
  file = {/Users/tlc37/Zotero/storage/YRI4XRZ6/Bouvrie and Hamzi - 2017 - Kernel Methods for the Approximation of Nonlinear .pdf}
}

@article{Dinuzzo2015,
  title = {Kernels for {{Linear Time Invariant System Identification}}},
  author = {Dinuzzo, Francesco},
  date = {2015-01},
  journaltitle = {SIAM Journal on Control and Optimization},
  shortjournal = {SIAM J. Control Optim.},
  volume = {53},
  number = {5},
  pages = {3299--3317},
  publisher = {{Society for Industrial and Applied Mathematics}},
  issn = {0363-0129},
  doi = {10.1137/130920319},
  url = {https://epubs.siam.org/doi/10.1137/130920319},
  urldate = {2023-03-29},
  abstract = {Regularized approaches have been successfully applied to linear system identification in recent years; many of them model unknown impulse responses exploiting the so-called Reproducing Kernel Hilbert Spaces (RKHSs) that enjoy the notable property of being in one-to-one correspondence with the class of positive semidefinite kernels. The necessary and sufficient condition for a RKHS to be stable, i.e., to contain only BIBO stable linear dynamic systems, has been known in the literature since at least 2006. However, an open question still persists and concerns the equivalence of such a condition with the absolute summability of the kernel. This paper provides a definite answer to this matter by proving that such correspondence does not hold. A counterexample is introduced that illustrates the existence of stable RKHSs that are induced by nonabsolutely summable kernels.},
  file = {/Users/tlc37/Dropbox (Cambridge University)/Zotero attachments/Dinuzzo2015/Dinuzzo2015.pdf}
}

@inproceedings{Erichson2023,
  title = {Lipschitz {{Recurrent Neural Networks}}},
  author = {Erichson, N. Benjamin and Azencot, Omri and Queiruga, Alejandro and Hodgkinson, Liam and Mahoney, Michael W.},
  date = {2023-03-28},
  url = {https://openreview.net/forum?id=-N7PBXqOUJZ},
  urldate = {2023-03-29},
  abstract = {Viewing recurrent neural networks (RNNs) as continuous-time dynamical systems, we propose a recurrent unit that describes the hidden state's evolution with two parts: a well-understood linear component plus a Lipschitz nonlinearity. This particular functional form facilitates stability analysis of the long-term behavior of the recurrent unit using tools from nonlinear systems theory. In turn, this enables architectural design decisions before experimentation. Sufficient conditions for global stability of the recurrent unit are obtained, motivating a novel scheme for constructing hidden-to-hidden matrices. Our experiments demonstrate that the Lipschitz RNN can outperform existing recurrent units on a range of benchmark tasks, including computer vision, language modeling and speech prediction tasks. Finally, through Hessian-based analysis we demonstrate that our Lipschitz recurrent unit is more robust with respect to input and parameter perturbations as compared to other continuous-time RNNs.},
  eventtitle = {International {{Conference}} on {{Learning Representations}}},
  langid = {english},
  file = {/Users/tlc37/Dropbox (Cambridge University)/Zotero attachments/Erichson et al2023/Erichson et al2023.pdf}
}

@incollection{Fazlyab2019a,
  title = {Efficient and Accurate Estimation of Lipschitz Constants for Deep Neural Networks},
  booktitle = {Proceedings of the 33rd {{International Conference}} on {{Neural Information Processing Systems}}},
  author = {Fazlyab, Mahyar and Robey, Alexander and Hassani, Hamed and Morari, Manfred and Pappas, George J.},
  date = {2019-12-08},
  number = {1025},
  pages = {11427--11438},
  publisher = {{Curran Associates Inc.}},
  location = {{Red Hook, NY, USA}},
  abstract = {Tight estimation of the Lipschitz constant for deep neural networks (DNNs) is useful in many applications ranging from robustness certification of classifiers to stability analysis of closed-loop systems with reinforcement learning controllers. Existing methods in the literature for estimating the Lipschitz constant suffer from either lack of accuracy or poor scalability. In this paper, we present a convex optimization framework to compute guaranteed upper bounds on the Lipschitz constant of DNNs both accurately and efficiently. Our main idea is to interpret activation functions as gradients of convex potential functions. Hence, they satisfy certain properties that can be described by quadratic constraints. This particular description allows us to pose the Lipschitz constant estimation problem as a semidefinite program (SDP). The resulting SDP can be adapted to increase either the estimation accuracy (by capturing the interaction between activation functions of different layers) or scalability (by decomposition and parallel implementation). We illustrate the utility of our approach with a variety of experiments on randomly generated networks and on classifiers trained on the MNIST and Iris datasets. In particular, we experimentally demonstrate that our Lipschitz bounds are the most accurate compared to those in the literature. We also study the impact of adversarial training methods on the Lipschitz bounds of the resulting classifiers and show that our bounds can be used to efficiently provide robustness guarantees.},
  file = {/Users/tlc37/Dropbox (Cambridge University)/Zotero attachments/Fazlyab et al2019/Fazlyab et al2019.pdf}
}

@article{Gevers1996,
  title = {Identification for Control},
  author = {Gevers, M.},
  date = {1996-01-01},
  journaltitle = {Annual Reviews in Control},
  shortjournal = {Annual Reviews in Control},
  volume = {20},
  pages = {95--106},
  issn = {1367-5788},
  doi = {10.1016/S1367-5788(97)00008-4},
  url = {https://www.sciencedirect.com/science/article/pii/S1367578897000084},
  urldate = {2023-03-29},
  abstract = {We present a conceptual characterization of the various ways of addressing the problem of identification for control. This leads us to distinguish between a dual control approach, an optimal experiment design approach and a robust control approach. The connections and distinctions between these three viewpoints are discussed, and recent results for each approach are briefly presented.},
  langid = {english},
  keywords = {Identification,identification for control,optimal experiment design.},
  file = {/Users/tlc37/Dropbox (Cambridge University)/Zotero attachments/Gevers1996/Gevers1996.pdf;/Users/tlc37/Zotero/storage/VYFDKHK3/S1367578897000084.html}
}

@online{Gonon2022,
  title = {Reservoir Kernels and {{Volterra}} Series},
  author = {Gonon, Lukas and Grigoryeva, Lyudmila and Ortega, Juan-Pablo},
  date = {2022-12-30},
  number = {arXiv:2212.14641},
  eprint = {arXiv:2212.14641},
  eprinttype = {arxiv},
  url = {http://arxiv.org/abs/2212.14641},
  urldate = {2023-03-29},
  abstract = {A universal kernel is constructed whose sections approximate any causal and time-invariant filter in the fading memory category with inputs and outputs in a finite-dimensional Euclidean space. This kernel is built using the reservoir functional associated with a state-space representation of the Volterra series expansion available for any analytic fading memory filter. It is hence called the Volterra reservoir kernel. Even though the statespace representation and the corresponding reservoir feature map are defined on an infinite-dimensional tensor algebra space, the kernel map is characterized by explicit recursions that are readily computable for specific data sets when employed in estimation problems using the representer theorem. We showcase the performance of the Volterra reservoir kernel in a popular data science application in relation to bitcoin price prediction.},
  langid = {english},
  pubstate = {preprint},
  keywords = {Computer Science - Machine Learning},
  file = {/Users/tlc37/Zotero/storage/GVQABHHH/Gonon et al. - 2022 - Reservoir kernels and Volterra series.pdf}
}

@article{Khosravi2021,
  title = {On {{Robustness}} of {{Kernel-Based Regularized System Identification}}},
  author = {Khosravi, Mohammad and Smith, Roy S.},
  date = {2021-01-01},
  journaltitle = {IFAC-PapersOnLine},
  shortjournal = {IFAC-PapersOnLine},
  series = {19th {{IFAC Symposium}} on {{System Identification SYSID}} 2021},
  volume = {54},
  number = {7},
  pages = {749--754},
  issn = {2405-8963},
  doi = {10.1016/j.ifacol.2021.08.451},
  url = {https://www.sciencedirect.com/science/article/pii/S2405896321012258},
  urldate = {2023-03-29},
  abstract = {This paper presents a novel feature of the kernel-based system identification method. We prove that the regularized kernel-based approach for the estimation of a finite impulse response is equivalent to a robust least-squares problem with a particular uncertainty set defined in terms of the kernel matrix, and thus, it is called kernel-based uncertainty set. We provide a theoretical foundation for the robustness of the kernel-based approach to input disturbances. Based on robust and regularized least-squares methods, different formulations of system identification are considered, where the kernel-based uncertainty set is employed in some of them. We apply these methods to a case where the input measurements are subject to disturbances. Subsequently, we perform extensive numerical experiments and compare the results to examine the impact of utilizing kernel-based uncertainty sets in the identification procedure. The numerical experiments confirm that the robust least square identification approach with the kernel-based uncertainty set improves the robustness of the estimation to the input disturbances.},
  langid = {english},
  keywords = {kernel-based approach,robustness,System identification},
  file = {/Users/tlc37/Dropbox (Cambridge University)/Zotero attachments/Khosravi et al2021/Khosravi et al2021.pdf}
}

@article{Khosravi2023,
  title = {Kernel-Based Identification with Frequency Domain Side-Information},
  author = {Khosravi, Mohammad and Smith, Roy S.},
  date = {2023-04-01},
  journaltitle = {Automatica},
  shortjournal = {Automatica},
  volume = {150},
  pages = {110813},
  issn = {0005-1098},
  doi = {10.1016/j.automatica.2022.110813},
  url = {https://www.sciencedirect.com/science/article/pii/S0005109822006793},
  urldate = {2023-03-29},
  abstract = {This paper discusses the problem of system identification when frequency domain side-information is available. We mainly consider the case where the side-information is provided as the H∞-norm of the system being bounded by a given scalar. This framework allows considering different forms of frequency domain side-information, such as the dissipativity of the system. We propose a nonparametric identification approach for estimating the impulse response of the system under the given side-information. The estimation problem is formulated as a constrained optimization in a stable reproducing kernel Hilbert space, where suitable constraints are considered for incorporating the desired frequency domain features. The resulting optimization has an infinite-dimensional feasible set with an infinite number of constraints. We show that this problem is a well-defined convex program with a unique solution. We propose a heuristic that tightly approximates this unique solution. The proposed approach is equivalent to solving a finite-dimensional convex quadratically constrained quadratic program. The efficiency of the discussed method is verified by several numerical examples.},
  langid = {english},
  keywords = {Frequency domain properties,Kernel-based methods,Optimization,Side-information,System identification},
  file = {/Users/tlc37/Dropbox (Cambridge University)/Zotero attachments/Khosravi et al2023/Khosravi et al2023.pdf}
}

@article{Ljung2020,
  title = {A Shift in Paradigm for System Identification},
  author = {Ljung, Lennart and Chen, Tianshi and Mu, Biqiang},
  date = {2020-02-01},
  journaltitle = {International Journal of Control},
  volume = {93},
  number = {2},
  pages = {173--180},
  publisher = {{Taylor \& Francis}},
  issn = {0020-7179},
  doi = {10.1080/00207179.2019.1578407},
  url = {https://doi.org/10.1080/00207179.2019.1578407},
  urldate = {2023-03-29},
  abstract = {System identification is a mature research area with well established paradigms, mostly based on classical statistical methods. Recently, there has been considerable interest in so called kernel-based regularisation methods applied to system identification problem. The recent literature on this is extensive and at times difficult to digest. The purpose of this contribution is to provide an accessible account of the main ideas and results of kernel-based regularisation methods for system identification. The focus is to assess the impact of these new techniques on the field and traditional paradigms.},
  keywords = {bias-variance tradeoff,kernel methods,linear regressions,regularisation,System identification},
  file = {/Users/tlc37/Dropbox (Cambridge University)/Zotero attachments/Ljung et al2020/Ljung et al2020.pdf}
}

@article{Pillonetto2010,
  title = {A New Kernel-Based Approach for Linear System Identification},
  author = {Pillonetto, Gianluigi and De Nicolao, Giuseppe},
  date = {2010-01-01},
  journaltitle = {Automatica},
  shortjournal = {Automatica},
  volume = {46},
  number = {1},
  pages = {81--93},
  issn = {0005-1098},
  doi = {10.1016/j.automatica.2009.10.031},
  url = {https://www.sciencedirect.com/science/article/pii/S0005109809004920},
  urldate = {2023-03-29},
  abstract = {This paper describes a new kernel-based approach for linear system identification of stable systems. We model the impulse response as the realization of a Gaussian process whose statistics, differently from previously adopted priors, include information not only on smoothness but also on BIBO-stability. The associated autocovariance defines what we call a stable spline kernel. The corresponding minimum variance estimate belongs to a reproducing kernel Hilbert space which is spectrally characterized. Compared to parametric identification techniques, the impulse response of the system is searched for within an infinite-dimensional space, dense in the space of continuous functions. Overparametrization is avoided by tuning few hyperparameters via marginal likelihood maximization. The proposed approach may prove particularly useful in the context of robust identification in order to obtain reduced order models by exploiting a two-step procedure that projects the nonparametric estimate onto the space of nominal models. The continuous-time derivation immediately extends to the discrete-time case. On several continuous- and discrete-time benchmarks taken from the literature the proposed approach compares very favorably with the existing parametric and nonparametric techniques.},
  langid = {english},
  keywords = {Bayesian estimation,Gaussian processes,Kernel-based methods,Linear system identification,Regularization,Robust identification,Stochastic embedding},
  file = {/Users/tlc37/Dropbox (Cambridge University)/Zotero attachments/Pillonetto et al2010/Pillonetto et al2010.pdf}
}

@article{Pillonetto2011,
  title = {A {{New Kernel-Based Approach}} for {{NonlinearSystem Identification}}},
  author = {Pillonetto, Gianluigi and Quang, Minh Ha and Chiuso, Alessandro},
  date = {2011-12},
  journaltitle = {IEEE Transactions on Automatic Control},
  shortjournal = {IEEE Trans. Automat. Contr.},
  volume = {56},
  number = {12},
  pages = {2825--2840},
  issn = {0018-9286},
  doi = {10.1109/TAC.2011.2131830},
  url = {http://ieeexplore.ieee.org/document/5738321/},
  urldate = {2023-03-29},
  abstract = {We present a novel nonparametric approach for identification of nonlinear systems. Exploiting the framework of Gaussian regression, the unknown nonlinear system is seen as a realization from a Gaussian random field. Its covariance encodes the idea of “fading” memory in the predictor and consists of a mixture of Gaussian kernels parametrized by few hyperparameters describing the interactions among past inputs and outputs. The kernel structure and the unknown hyperparameters are estimated maximizing their marginal likelihood so that the user is not required to define any part of the algorithmic architecture, e.g., the regressors and the model order. Once the kernel is estimated, the nonlinear model is obtained solving a Tikhonov-type variational problem. The Hilbert space the estimator belongs to is characterized. Benchmarks problems taken from the literature show the effectiveness of the new approach, also comparing its performance with a recently proposed algorithm based on direct weight optimization and with parametric approaches with model order estimated by AIC or BIC.},
  langid = {english},
  file = {/Users/tlc37/Zotero/storage/FWRHDV93/Pillonetto et al. - 2011 - A New Kernel-Based Approach for NonlinearSystem Id.pdf}
}

@article{Pillonetto2014,
  title = {Kernel Methods in System Identification, Machine Learning and Function Estimation: {{A}} Survey},
  shorttitle = {Kernel Methods in System Identification, Machine Learning and Function Estimation},
  author = {Pillonetto, Gianluigi and Dinuzzo, Francesco and Chen, Tianshi and De Nicolao, Giuseppe and Ljung, Lennart},
  date = {2014-03-01},
  journaltitle = {Automatica},
  shortjournal = {Automatica},
  volume = {50},
  number = {3},
  pages = {657--682},
  issn = {0005-1098},
  doi = {10.1016/j.automatica.2014.01.001},
  url = {https://www.sciencedirect.com/science/article/pii/S000510981400020X},
  urldate = {2023-03-29},
  abstract = {Most of the currently used techniques for linear system identification are based on classical estimation paradigms coming from mathematical statistics. In particular, maximum likelihood and prediction error methods represent the mainstream approaches to identification of linear dynamic systems, with a long history of theoretical and algorithmic contributions. Parallel to this, in the machine learning community alternative techniques have been developed. Until recently, there has been little contact between these two worlds. The first aim of this survey is to make accessible to the control community the key mathematical tools and concepts as well as the computational aspects underpinning these learning techniques. In particular, we focus on kernel-based regularization and its connections with reproducing kernel Hilbert spaces and Bayesian estimation of Gaussian processes. The second aim is to demonstrate that learning techniques tailored to the specific features of dynamic systems may outperform conventional parametric approaches for identification of stable linear systems.},
  langid = {english},
  keywords = {Bias-variance trade-off,Gaussian processes,Inverse problems,Kernel-based regularization,Linear system identification,Model complexity selection,Prediction error methods,Reproducing kernel Hilbert spaces},
  file = {/Users/tlc37/Dropbox (Cambridge University)/Zotero attachments/Pillonetto et al2014/Pillonetto et al2014.pdf}
}

@inproceedings{Revay2020b,
  title = {Contracting {{Implicit Recurrent Neural Networks}}: {{Stable Models}} with {{Improved Trainability}}},
  shorttitle = {Contracting {{Implicit Recurrent Neural Networks}}},
  booktitle = {Proceedings of the 2nd {{Conference}} on {{Learning}} for {{Dynamics}} and {{Control}}},
  author = {Revay, Max and Manchester, Ian},
  date = {2020-07-31},
  pages = {393--403},
  publisher = {{PMLR}},
  issn = {2640-3498},
  url = {https://proceedings.mlr.press/v120/revay20a.html},
  urldate = {2023-03-29},
  abstract = {Stability of recurrent models is closely linked with trainability, generalizability and in some applications,safety. Methods that train stable recurrent neural networks, however, do so at a significant cost to expressibility. We propose an implicit model structure that allows for a convex parametrization of stable models using contraction analysis of non-linear systems. Using these stability conditions we propose a new approach to model initialization and then provide a number of empirical results comparing the performance of our proposed model set to previous stable RNNs and vanilla RNNs. By carefully controlling stability in the model, we observe a significant increase in the speed of training and model performance.},
  eventtitle = {Learning for {{Dynamics}} and {{Control}}},
  langid = {english},
  file = {/Users/tlc37/Dropbox (Cambridge University)/Zotero attachments/Revay et al2020/Revay et al2020.pdf}
}

@inproceedings{Revay2021c,
  title = {Recurrent {{Equilibrium Networks}}: {{Unconstrained Learning}} of {{Stable}} and {{Robust Dynamical Models}}},
  shorttitle = {Recurrent {{Equilibrium Networks}}},
  booktitle = {2021 60th {{IEEE Conference}} on {{Decision}} and {{Control}} ({{CDC}})},
  author = {Revay, Max and Wang, Ruigang and Manchester, Ian R.},
  date = {2021-12},
  pages = {2282--2287},
  issn = {2576-2370},
  doi = {10.1109/CDC45484.2021.9683054},
  abstract = {This paper introduces recurrent equilibrium networks (RENs), a new class of nonlinear dynamical models for applications in machine learning and system identification. The new model class has "built in" guarantees of stability and robustness: all models in the class are contracting – a strong form of nonlinear stability – and models can have prescribed Lipschitz bounds. RENs are otherwise very flexible: they can represent all stable linear systems, all previously-known sets of contracting recurrent neural networks, all deep feedforward neural networks, and all stable Wiener/Hammerstein models. RENs are parameterized directly by a vector in ℝN, i.e. stability and robustness are ensured without parameter constraints, which simplifies learning since generic methods for unconstrained optimization can be used. The performance of the robustness of the new model set is evaluated on benchmark nonlinear system identification problems.},
  eventtitle = {2021 60th {{IEEE Conference}} on {{Decision}} and {{Control}} ({{CDC}})},
  keywords = {Conferences,Linear systems,Machine learning,Recurrent neural networks,Robustness,Stability analysis,System identification},
  file = {/Users/tlc37/Dropbox (Cambridge University)/Zotero attachments/Revay et al2021/Revay et al2021.pdf;/Users/tlc37/Zotero/storage/GDU4AFBR/9683054.html}
}

@article{Revay2021d,
  title = {A {{Convex Parameterization}} of {{Robust Recurrent Neural Networks}}},
  author = {Revay, Max and Wang, Ruigang and Manchester, Ian R.},
  date = {2021-10},
  journaltitle = {IEEE Control Systems Letters},
  volume = {5},
  number = {4},
  pages = {1363--1368},
  issn = {2475-1456},
  doi = {10.1109/LCSYS.2020.3038221},
  abstract = {Recurrent neural networks (RNNs) are a class of nonlinear dynamical systems often used to model sequence-to-sequence maps. RNNs have excellent expressive power but lack the stability or robustness guarantees that are necessary for many applications. In this letter, we formulate convex sets of RNNs with stability and robustness guarantees. The guarantees are derived using incremental quadratic constraints and can ensure global exponential stability of all solutions, and bounds on incremental ℓ2 gain (the Lipschitz constant of the learned sequence-to-sequence mapping). Using an implicit model structure, we construct a parametrization of RNNs that is jointly convex in the model parameters and stability certificate. We prove that this model structure includes all previously-proposed convex sets of stable RNNs as special cases, and also includes all stable linear dynamical systems. Numerical experiments illustrate the utility of the proposed model class in the context of non-linear system identification.},
  eventtitle = {{{IEEE Control Systems Letters}}},
  keywords = {machine learning,Nonlinear dynamical systems,Nonlinear systems identification,Numerical models,Numerical stability,Recurrent neural networks,Robustness,Stability analysis,stability of nonlinear systems,System identification},
  file = {/Users/tlc37/Dropbox (Cambridge University)/Zotero attachments/Revay et al2021/Revay et al22.pdf;/Users/tlc37/Zotero/storage/6BD3PR72/stamp.html}
}

@article{Sandberg1994,
  title = {Approximately Finite Memory and the Circle Criterion},
  author = {Sandberg, I.W.},
  date = {1994-07},
  journaltitle = {IEEE Transactions on Circuits and Systems I: Fundamental Theory and Applications},
  volume = {41},
  number = {7},
  pages = {473--476},
  issn = {1558-1268},
  doi = {10.1109/81.298359},
  abstract = {In recent work a complete characterization has been given of those input-output maps G that can be uniformly approximated by the maps of certain simple structures. The criterion is that G must satisfy certain continuity and approximately finite memory conditions. It is proved here that the conditions are satisfied by the input-output maps of feedback systems of a familiar type containing a (possibly distributed) linear part and a sector nonlinearity for which the circle condition for stability is met. In particular, this shows that such feedback systems, with inputs drawn from a certain large set of bounded functions, can be input-output approximated arbitrarily well by a structure that takes the form of a feedforward dynamical neural network.{$<>$}},
  eventtitle = {{{IEEE Transactions}} on {{Circuits}} and {{Systems I}}: {{Fundamental Theory}} and {{Applications}}},
  keywords = {Communication channels,Equalizers,Extrapolation,Feedforward neural networks,Network synthesis,Neural networks,Neurofeedback,Nonlinear distortion,Nonlinear systems,Stability},
  file = {/Users/tlc37/Dropbox (Cambridge University)/Zotero attachments/Sandberg1994/Sandberg1994.pdf;/Users/tlc37/Zotero/storage/SSZXZA9X/298359.html}
}

@inproceedings{Winston2020,
  title = {Monotone Operator Equilibrium Networks},
  booktitle = {Proceedings of the 34th {{International Conference}} on {{Neural Information Processing Systems}}},
  author = {Winston, Ezra and Kolter, J. Zico},
  date = {2020-12-06},
  series = {{{NIPS}}'20},
  pages = {10718--10728},
  publisher = {{Curran Associates Inc.}},
  location = {{Red Hook, NY, USA}},
  abstract = {Implicit-depth models such as Deep Equilibrium Networks have recently been shown to match or exceed the performance of traditional deep networks while being much more memory efficient. However, these models suffer from unstable convergence to a solution and lack guarantees that a solution exists. On the other hand, Neural ODEs, another class of implicit-depth models, do guarantee existence of a unique solution but perform poorly compared with traditional networks. In this paper, we develop a new class of implicit-depth model based on the theory of monotone operators, the Monotone Operator Equilibrium Network (monDEQ). We show the close connection between finding the equilibrium point of an implicit network and solving a form of monotone operator splitting problem, which admits efficient solvers with guaranteed, stable convergence. We then develop a parameterization of the network which ensures that all operators remain monotone, which guarantees the existence of a unique equilibrium point. Finally, we show how to instantiate several versions of these models, and implement the resulting iterative solvers, for structured linear operators such as multi-scale convolutions. The resulting models vastly outperform the Neural ODE-based models while also being more computationally efficient.},
  isbn = {978-1-71382-954-6},
  file = {/Users/tlc37/Dropbox (Cambridge University)/Zotero attachments/Winston et al2020/Winston et al2020.pdf}
}

@inproceedings{christmann_universal_2010,
	title = {Universal Kernels on Non-Standard Input Spaces},
	volume = {23},
	url = {https://papers.nips.cc/paper/2010/hash/4e0cb6fb5fb446d1c92ede2ed8780188-Abstract.html},
	booktitle = {Advances in Neural Information Processing Systems},
	publisher = {Curran Associates, Inc.},
	author = {Christmann, Andreas and Steinwart, Ingo},
	urldate = {2023-03-01},
	date = {2010},
	file = {Full Text PDF:C\:\\Users\\huoyo\\Zotero\\storage\\Q866289X\\Christmann and Steinwart - 2010 - Universal Kernels on Non-Standard Input Spaces.pdf:application/pdf},
}

@inproceedings{boffi_nonparametric_2021,
	location = {Austin, {TX}, {USA}},
	title = {Nonparametric Adaptive Control and Prediction: Theory and Randomized Algorithms},
	isbn = {978-1-66543-659-5},
	url = {https://ieeexplore.ieee.org/document/9682907/},
	doi = {10.1109/CDC45484.2021.9682907},
	shorttitle = {Nonparametric Adaptive Control and Prediction},
	abstract = {A key assumption in the theory of nonlinear adaptive control is that the uncertainty of the system can be expressed in the linear span of a set of known basis functions. While this assumption leads to eﬃcient algorithms, it limits applications to very speciﬁc classes of systems. We introduce a novel nonparametric adaptive algorithm that estimates an inﬁnite-dimensional density over parameters online to learn an unknown dynamics in a reproducing kernel Hilbert space. Surprisingly, the resulting control input admits an analytical expression that enables its implementation despite its underlying inﬁnite-dimensional structure. While this adaptive input is rich and expressive – subsuming, for example, traditional linear parameterizations – its computational complexity grows linearly with time, making it comparatively more expensive than its parametric counterparts. Leveraging the theory of random Fourier features, we provide an eﬃcient randomized implementation that recovers the complexity of classical parametric methods while provably retaining the expressivity of the nonparametric input. In particular, our explicit bounds only depend polynomially on the underlying parameters of the system, allowing our proposed algorithms to eﬃciently scale to high-dimensional systems. As an illustration of the method, we demonstrate the ability of the randomized approximation algorithm to learn a predictive model of a 60-dimensional system consisting of ten point masses interacting through Newtonian gravitation. By reinterpretation as a gradient ﬂow on a speciﬁc loss, we conclude with a natural extension of our kernel-based adaptive algorithms to deep neural networks. We show empirically that the extra expressivity aﬀorded by deep representations can lead to improved performance at the expense of the closed-loop stability that is rigorously guaranteed and consistently observed for kernel machines.},
	eventtitle = {2021 60th {IEEE} Conference on Decision and Control ({CDC})},
	pages = {2935--2942},
	booktitle = {2021 60th {IEEE} Conference on Decision and Control ({CDC})},
	publisher = {{IEEE}},
	author = {Boffi, Nicholas M. and Tu, Stephen and Slotine, Jean-Jacques},
	urldate = {2023-03-14},
	date = {2021-12-14},
	langid = {english},
	file = {Boffi et al. - 2021 - Nonparametric Adaptive Control and Prediction The.pdf:C\:\\Users\\huoyo\\Zotero\\storage\\G5UQYPMW\\Boffi et al. - 2021 - Nonparametric Adaptive Control and Prediction The.pdf:application/pdf},
}

@book{paulsen_introduction_2016,
	location = {Cambridge},
	title = {An Introduction to the Theory of Reproducing Kernel Hilbert Spaces},
	isbn = {978-1-107-10409-9},
	url = {https://www.cambridge.org/core/books/an-introduction-to-the-theory-of-reproducing-kernel-hilbert-spaces/C3FD9DF5F5C21693DD4ED812B531269A},
	series = {Cambridge Studies in Advanced Mathematics},
	abstract = {Reproducing kernel Hilbert spaces have developed into an important tool in many areas, especially statistics and machine learning, and they play a valuable role in complex analysis, probability, group representation theory, and the theory of integral operators. This unique text offers a unified overview of the topic, providing detailed examples of applications, as well as covering the fundamental underlying theory, including chapters on interpolation and approximation, Cholesky and Schur operations on kernels, and vector-valued spaces. Self-contained and accessibly written, with exercises at the end of each chapter, this unrivalled treatment of the topic serves as an ideal introduction for graduate students across mathematics, computer science, and engineering, as well as a useful reference for researchers working in functional analysis or its applications.},
	publisher = {Cambridge University Press},
	author = {Paulsen, Vern I. and Raghupathi, Mrinal},
	urldate = {2023-03-19},
	date = {2016},
	doi = {10.1017/CBO9781316219232},
	file = {Snapshot:C\:\\Users\\huoyo\\Zotero\\storage\\TZTQDLI6\\C3FD9DF5F5C21693DD4ED812B531269A.html:text/html},
}

@article{micchelli_learning_2005,
	title = {On Learning Vector-Valued Functions},
	volume = {17},
	issn = {0899-7667, 1530-888X},
	url = {https://direct.mit.edu/neco/article/17/1/177-204/6909},
	doi = {10.1162/0899766052530802},
	abstract = {In this paper, we provide a study of learning in a Hilbert space of vector−valued functions. We motivate the need for extending learning theory of scalar−valued functions by practical considerations and establish some basic results for learning vector−valued functions which should prove useful in applications. Speciﬁcally, we allow an output space Y to be a Hilbert space and we consider a reproducing kernel Hilbert space of functions whose values lie in Y. In this setting, we derive the form of the minimal norm interpolant to a ﬁnite set of data and apply it to study some regularization functionals which are important in learning theory. We consider speciﬁc examples of such functionals corresponding to multiple−output regularization networks and support vector machines, both for regression and classiﬁcation. Finally, we provide classes of operator−valued kernels of the dot product and translation invariant type.},
	pages = {177--204},
	number = {1},
	journaltitle = {Neural Computation},
	shortjournal = {Neural Computation},
	author = {Micchelli, Charles A. and Pontil, Massimiliano},
	urldate = {2023-03-19},
	date = {2005-01-01},
	langid = {english},
	file = {Micchelli and Pontil - 2005 - On Learning Vector-Valued Functions.pdf:C\:\\Users\\huoyo\\Zotero\\storage\\PQCZRZ99\\Micchelli and Pontil - 2005 - On Learning Vector-Valued Functions.pdf:application/pdf},
}

@article{caponnetto_universal_2008,
	title = {Universal Multi-Task Kernels},
	volume = {9},
	issn = {1533-7928},
	url = {http://jmlr.org/papers/v9/caponnetto08a.html},
	abstract = {In this paper we are concerned with reproducing kernel Hilbert spaces {HK} of functions from an input space into a Hilbert space Y, an environment appropriate for multi-task learning. The reproducing kernel K associated to {HK} has its values as operators on Y. Our primary goal here is to derive conditions which ensure that the kernel K is universal. This means that on every compact subset of the input space, every continuous function with values in Y can be uniformly approximated by sections of the kernel. We provide various characterizations of universal kernels and highlight them with several concrete examples of some practical importance. Our analysis uses basic principles of functional analysis and especially the useful notion of vector measures which we describe in sufficient detail to clarify our results.},
	pages = {1615--1646},
	number = {52},
	journaltitle = {Journal of Machine Learning Research},
	author = {Caponnetto, Andrea and Micchelli, Charles A. and Pontil, Massimiliano and Ying, Yiming},
	urldate = {2023-03-28},
	date = {2008},
	file = {Full Text PDF:C\:\\Users\\huoyo\\Zotero\\storage\\MAMD2UX8\\Caponnetto et al. - 2008 - Universal Multi-Task Kernels.pdf:application/pdf},
}

@article{grigoryeva_echo_2018,
	title = {Echo state networks are universal},
	volume = {108},
	issn = {08936080},
	url = {https://linkinghub.elsevier.com/retrieve/pii/S089360801830251X},
	doi = {10.1016/j.neunet.2018.08.025},
	abstract = {This paper shows that echo state networks are universal uniform approximants in the context of discretetime fading memory filters with uniformly bounded inputs defined on negative infinite times. This result guarantees that any fading memory input/output system in discrete time can be realized as a simple finitedimensional neural network-type state-space model with a static linear readout map. This approximation is valid for infinite time intervals. The proof of this statement is based on fundamental results, also presented in this work, about the topological nature of the fading memory property and about reservoir computing systems generated by continuous reservoir maps.},
	pages = {495--508},
	journaltitle = {Neural Networks},
	shortjournal = {Neural Networks},
	author = {Grigoryeva, Lyudmila and Ortega, Juan-Pablo},
	urldate = {2023-03-01},
	date = {2018-12},
	langid = {english},
	keywords = {{echoNetwork}},
	file = {Grigoryeva and Ortega - 2018 - Echo state networks are universal.pdf:C\:\\Users\\huoyo\\Zotero\\storage\\PH7ND2EN\\Grigoryeva and Ortega - 2018 - Echo state networks are universal.pdf:application/pdf},
}

@article{boyd_fading_1985,
	title = {Fading memory and the problem of approximating nonlinear operators with Volterra series},
	volume = {32},
	issn = {0098-4094},
	url = {http://ieeexplore.ieee.org/document/1085649/},
	doi = {10.1109/TCS.1985.1085649},
	pages = {1150--1161},
	number = {11},
	journaltitle = {{IEEE} Transactions on Circuits and Systems},
	shortjournal = {{IEEE} Trans. Circuits Syst.},
	author = {Boyd, S. and Chua, L.},
	urldate = {2023-03-01},
	date = {1985-11},
	langid = {english},
	file = {Boyd and Chua - 1985 - Fading memory and the problem of approximating non.pdf:C\:\\Users\\huoyo\\Zotero\\storage\\C2ZW2JD8\\Boyd and Chua - 1985 - Fading memory and the problem of approximating non.pdf:application/pdf},
}

@article{maass_neural_2000,
	title = {Neural Systems as Nonlinear Filters},
	volume = {12},
	doi = {10.1162/089976600300015123},
	abstract = {Experimental data show that biological synapses behave quite differently from the symbolic synapses in all common artificial neural network models. Biological synapses are dynamic; their "weight" changes on a short timescale by several hundred percent in dependence of the past input to the synapse. In this article we address the question how this inherent synaptic dynamics (which should not be confused with long term learning) affects the computational power of a neural network. In particular, we analyze computations on temporal and spatiotemporal patterns, and we give a complete mathematical characterization of all filters that can be approximated by feedforward neural networks with dynamic synapses. It turns out that even with just a single hidden layer, such networks can approximate a very rich class of nonlinear filters: all filters that can be characterized by Volterra series. This result is robust with regard to various changes in the model for synaptic dynamics. Our characterization result provides for all nonlinear filters that are approximable by Volterra series a new complexity hierarchy related to the cost of implementing such filters in neural systems.},
	journaltitle = {Neural Computation},
	shortjournal = {Neural Computation},
	author = {Maass, Wolfgang and Sontag, Eduardo},
	date = {2000-05-02},
}

@article{shamma_fading-memory_1993,
	title = {Fading-memory feedback systems and robust stability},
	volume = {29},
	issn = {0005-1098},
	url = {https://www.sciencedirect.com/science/article/pii/000510989390182S},
	doi = {10.1016/0005-1098(93)90182-S},
	abstract = {This paper considers fading memory for nonlinear time-varying systems and associated problems of robust stability. We define two notions of fading memory for stable dynamical systems: uniform and pointwise. We then provide conditions under which stable linear or nonlinear systems exhibit uniform or pointwise fading memory. In particular, we show that (1) all stable discrete-time linear time-varying ({LTV}) systems have uniform fading-memory, (2) all stable continuous-time {LTV} systems have pointwise fading-memory, and (3) stable finite-dimensional continuous-time {LTV} systems have uniform fading-memory. We then show that a version of the small gain theorem which employs the asymptotic gain of a fading-memory system is necessary for the stable invertibility of certain feedback operators. These results are presented for both continuous-time and discrete-time systems using general ℓp or Lp notions of input/output stability and generalize existing results for ℓ2 stability. We further investigate fading memory in a closed-loop context. For linear plants, we parametrize all nonlinear controllers which lead to closed-loop pointwise fading memory.},
	pages = {191--200},
	number = {1},
	journaltitle = {Automatica},
	shortjournal = {Automatica},
	author = {Shamma, Jeff S. and Zhao, Rongze},
	urldate = {2023-03-20},
	date = {1993-01-01},
	langid = {english},
	keywords = {Nonlinear systems, robust control, robustness, system theory, time-varying systems},
	file = {ScienceDirect Full Text PDF:C\:\\Users\\huoyo\\Zotero\\storage\\YE6ZWU5U\\Shamma and Zhao - 1993 - Fading-memory feedback systems and robust stabilit.pdf:application/pdf;ScienceDirect Snapshot:C\:\\Users\\huoyo\\Zotero\\storage\\DXATSJ2U\\000510989390182S.html:text/html},
}

@article{matthews_identification_1994,
	title = {The identification of nonlinear discrete-time fading-memory systems using neural network models},
	volume = {41},
	issn = {1558-125X},
	doi = {10.1109/82.331544},
	abstract = {A fading-memory system is a system that tends to forget its input asymptotically over time. It has been shown that discrete-time fading-memory systems can be uniformly approximated arbitrarily closely over a set of bounded input sequences simply by uniformly approximating sufficiently closely either the external or internal representation of the system. In other words, the problem of uniformly approximating a fading-memory system reduces to the problem of uniformly approximating continuous real-valued functions on compact sets. The perceptron is a parametric model that realizes a set of continuous real-valued functions that is uniformly dense in the set of all continuous real-valued functions. Using the perceptron to uniformly approximate the external and internal representations of a discrete-time fading-memory system results, respectively, in simple finite-memory and infinite-memory parametric system models. Algorithms for estimating the model parameters that yield a best approximation to a given fading-memory system are discussed. An application to nonlinear noise cancellation in telephone systems is presented.{\textless}{\textgreater}},
	pages = {740--751},
	number = {11},
	journaltitle = {{IEEE} Transactions on Circuits and Systems {II}: Analog and Digital Signal Processing},
	author = {Matthews, M.B. and Moschytz, G.S.},
	date = {1994-11},
	note = {Conference Name: {IEEE} Transactions on Circuits and Systems {II}: Analog and Digital Signal Processing},
	keywords = {Approximation algorithms, Echo cancellers, Mathematical model, Neural networks, Noise cancellation, Parameter estimation, Parametric statistics, Predictive models, Telephony, Yield estimation},
	file = {IEEE Xplore Abstract Record:C\:\\Users\\huoyo\\Zotero\\storage\\LDKY8GZ3\\331544.html:text/html;IEEE Xplore Full Text PDF:C\:\\Users\\huoyo\\Zotero\\storage\\836AC4HD\\Matthews and Moschytz - 1994 - The identification of nonlinear discrete-time fadi.pdf:application/pdf},
}

@article{maass_computational_2004,
	title = {On the computational power of circuits of spiking neurons},
	volume = {69},
	issn = {0022-0000},
	url = {https://www.sciencedirect.com/science/article/pii/S0022000004000406},
	doi = {10.1016/j.jcss.2004.04.001},
	abstract = {Complex real-time computations on multi-modal time-varying input streams are carried out by generic cortical microcircuits. Obstacles for the development of adequate theoretical models that could explain the seemingly universal power of cortical microcircuits for real-time computing are the complexity and diversity of their computational units (neurons and synapses), as well as the traditional emphasis on offline computing in almost all theoretical approaches towards neural computation. In this article, we initiate a rigorous mathematical analysis of the real-time computing capabilities of a new generation of models for neural computation, liquid state machines, that can be implemented with—in fact benefit from—diverse computational units. Hence, realistic models for cortical microcircuits represent special instances of such liquid state machines, without any need to simplify or homogenize their diverse computational units. We present proofs of two theorems about the potential computational power of such models for real-time computing, both on analog input streams and for spike trains as inputs.},
	pages = {593--616},
	number = {4},
	journaltitle = {Journal of Computer and System Sciences},
	shortjournal = {Journal of Computer and System Sciences},
	author = {Maass, Wolfgang and Markram, Henry},
	urldate = {2023-06-27},
	date = {2004-12-01},
	langid = {english},
	file = {ScienceDirect Full Text PDF:C\:\\Users\\huoyo\\Zotero\\storage\\NAJPXAUF\\Maass and Markram - 2004 - On the computational power of circuits of spiking .pdf:application/pdf;ScienceDirect Snapshot:C\:\\Users\\huoyo\\Zotero\\storage\\HCEA9RSI\\S0022000004000406.html:text/html},
}

@article{burghi_system_2020,
	title = {System identification of biophysical neuronal models},
	url = {https://ieeexplore.ieee.org/document/9304363/},
	doi = {10.1109/CDC42340.2020.9304363},
	abstract = {After sixty years of quantitative biophysical modeling of neurons, the identification of neuronal dynamics from input-output data remains a challenging problem, primarily due to the inherently nonlinear nature of excitable behaviors. By reformulating the problem in terms of the identification of an operator with fading memory, we explore a simple approach based on a parametrization given by a series interconnection of Generalized Orthonormal Basis Functions ({GOBFs}) and static Artificial Neural Networks. We show that {GOBFs} are particularly well-suited to tackle the identification problem, and provide a heuristic for selecting {GOBF} poles which addresses the ultra-sensitivity of neuronal behaviors. The method is illustrated on the identification of a bursting model from the crab stomatogastric ganglion.},
	pages = {6180--6185},
	journaltitle = {2020 59th {IEEE} Conference on Decision and Control ({CDC})},
	author = {Burghi, Thiago B. and Schoukens, Maarten and Sepulchre, Rodolphe},
	urldate = {2024-02-28},
	date = {2020-12-14},
	note = {},
	file = {Submitted Version:C\:\\Users\\huoyo\\Zotero\\storage\\5V2RCRYH\\Burghi et al. - 2020 - System identification of biophysical neuronal mode.pdf:application/pdf},
}

@article{van_waarde_kernel-based_2023,
	title = {Kernel-Based Models for System Analysis},
	volume = {68},
	issn = {1558-2523},
	url = {https://ieeexplore.ieee.org/document/9935290},
	doi = {10.1109/TAC.2022.3218944},
	abstract = {This article introduces a computational framework to identify nonlinear input–output operators that fit a set of system trajectories while satisfying incremental integral quadratic constraints. The data fitting algorithm is thus regularized by suitable input–output properties required for system analysis and control design. This biased identification problem is shown to admit the tractable solution of a regularized least squares problem when formulated in a suitable reproducing kernel Hilbert space. The kernel-based framework is a departure from the prevailing state-space framework. It is motivated by fundamental limitations of nonlinear state-space models at combining the fitting requirements of data-based modeling with the input–output requirements of system analysis and physical modeling.},
	pages = {5317--5332},
	number = {9},
	journaltitle = {{IEEE} Transactions on Automatic Control},
	author = {van Waarde, Henk J. and Sepulchre, Rodolphe},
	urldate = {2024-02-28},
	date = {2023-09},
	note = {},
	keywords = {Analytical models, Control design, Hilbert space, Identification for control, Kernel, machine learning, Machine learning, modeling, nonlinear systems, System analysis and design, system identification, Trajectory},
	file = {IEEE Xplore Abstract Record:C\:\\Users\\huoyo\\Zotero\\storage\\TLX25GLQ\\9935290.html:text/html;IEEE Xplore Full Text PDF:C\:\\Users\\huoyo\\Zotero\\storage\\XUMCPEXE\\van Waarde and Sepulchre - 2023 - Kernel-Based Models for System Analysis.pdf:application/pdf},
}

@article{sepulchre_incremental_2022,
	title = {On the incremental form of dissipativity},
	volume = {55},
	issn = {2405-8963},
	url = {https://www.sciencedirect.com/science/article/pii/S2405896322027008},
	doi = {10.1016/j.ifacol.2022.11.067},
	series = {25th International Symposium on Mathematical Theory of Networks and Systems {MTNS} 2022},
	abstract = {Following the seminal work of Zames, the input-output theory of the 70s acknowledged that incremental properties (e.g. incremental gain) are the relevant quantities to study in nonlinear feedback system analysis. Yet, non-incremental analysis has dominated the use of dissipativity theory in nonlinear control from the 80s. Results connecting dissipativity theory and incremental analysis are scattered and progress has been limited. This abstract investigates whether this theoretical gap is of fundamental nature and considers new avenues to circumvent it.},
	pages = {290--294},
	number = {30},
	journaltitle = {{IFAC}-{PapersOnLine}},
	shortjournal = {{IFAC}-{PapersOnLine}},
	author = {Sepulchre, Rodolphe and Chaffey, Thomas and Forni, Fulvio},
	urldate = {2024-02-28},
	date = {2022-01-01},
	keywords = {circuit theory, contraction, incremental dissipativity, incremental passivity, monotonicity, passivity},
	file = {Full Text:C\:\\Users\\huoyo\\Zotero\\storage\\2G8RSU6J\\Sepulchre et al. - 2022 - On the incremental form of dissipativity.pdf:application/pdf;ScienceDirect Snapshot:C\:\\Users\\huoyo\\Zotero\\storage\\JTXXHDUZ\\S2405896322027008.html:text/html},
}
}
\end{document}